\documentclass[journal]{IEEEtran}
\usepackage{amsfonts,amssymb}
\usepackage{amsthm}
\usepackage{graphicx,graphics}
\usepackage{cite,url}
\usepackage{smartref}
\usepackage{amsmath}
\usepackage{xcolor}
\usepackage{array}
\usepackage{tikz}
\usepackage{arydshln}

\newtheorem{definition}{Definition}
\newtheorem{theorem}{Theorem}
\newtheorem{proposition}{Proposition}
\newtheorem{rem}{Remark}
\allowdisplaybreaks 

\begin{document}
	\title{Single-Anchor Two-Way Localization Bounds for 5G mmWave Systems}
	\author{Zohair Abu-Shaban~\IEEEmembership{Senior Member, IEEE}, Henk Wymeersch~\IEEEmembership{Senior Member, IEEE},\\ 
Thushara Abhayapala~\IEEEmembership{Senior Member, IEEE}, and Gonzalo Seco-Granados~\IEEEmembership{Senior Member, IEEE}
	}

\maketitle
\thispagestyle{empty}
{{\let\thefootnote\relax\footnotetext{
Copyright (c) 2015 IEEE. Personal use of this material is permitted. However, permission to use this material for any other purposes must be obtained from the IEEE by sending a request to pubs-permissions@ieee.org.

Zohair Abu-Shaban is with the School of Engineering and Information Technology, University of New South Wales (UNSW), Canberra, Australia. Email: zohair.abushaban@ieee.org. Henk Wymeersch is with the Department of Signals and Systems, Chalmers University of Technology, Sweden. Email: henkw@chalmers.se. Thushara Abhayapala is with the Research School of Engineering (RSEng), the Australian National University (ANU), Canberra, Australia. Email: thushara.abhayapala@anu.edu.au.  Gonzalo Seco-Granados is with the Department of Telecommunications and Systems Engineering, Universitat Aut\`onoma de Barcelona, Spain (UAB). Email: gonzalo.seco@uab.cat.

Part of this paper was presented in IEEE GLOBECOM Workshops, Abu Dhabi, UAE, Dec 2018.  

This work is supported in part by the Australian Government's Research Training Program (RTP), the Horizon2020 projects HIGHTS  (High precision positioning for cooperative ITS applications) MG-3.5a-2014-636537 and 5GCAR (Fifth Generation Communication Automotive Research and innovation), the VINNOVA COPPLAR project, funded under Strategic Vehicle Research and Innovation Grant No. 2015-04849, and the Spanish Ministry of Economy, Industry and Competitiveness under Grants TEC2017-89925-R and TEC2017-90808-REDT.
}}
\begin{abstract}
	Recently, millimeter-wave (mmWave) 5G localization has been shown to be to provide centimeter-level accuracy, lending itself to many location-aware applications, e.g., connected autonomous vehicles (CAVs).  One assumption usually made in the investigation of localization methods is that the user equipment (UE), {i.e., a CAV,} and the base station (BS) are {time} synchronized. In this paper, we remove this assumption and investigate two two-way localization protocols: (i) a round-trip localization protocol (RLP), whereby the BS and UE exchange signals in two rounds of transmission and then localization is achieved using the signal received in the second round; (ii) a collaborative localization protocol (CLP), whereby localization is achieved using the signals received in the two rounds. We derive the position and orientation error bounds applying beamforming at both ends and compare them to the traditional one-way localization. Our results show that mmWave localization is mainly limited by the angular rather than the temporal estimation and that CLP significantly outperforms RLP. Our simulations also show that it is more beneficial to have more antennas at the BS than at the UE.
\end{abstract}
\section{Introduction}
The fifth generation of mobile communication (5G) using millimeter-wave technology (mmWave) will be the first generation to integrate the location information in the network design and optimization \cite{Taranto2014, Akyildiz2016}, for example through, beamforming \cite{Aviles2016}, pilot assignment \cite{Akbar2016}, and resource allocation \cite{Muppirisetty2016}.

Localization error in mmWave 5G has been shown to be in the order of centimeters, making location-aware applications in 5G much more attractive than ever before. Such applications including targeted content delivery \cite{Ma2014}, vehicular communication \cite{Garcia2016}, and assisted living systems \cite{Witrisal2016}. Of particular interest are systems of connected autonomous vehicles (CAVs) \cite{WhitePaper}, which are a typical use case of 5G communication \cite{Ericsson}, and air-ground communication with unmanned aerial vehicles (UAVs) \cite{Qiu2019}.

Due to the deployment of arrays with a high number of antennas at the transmitter and the receiver, and the utilization of {large} bandwidth \cite{Andrews2014, Pi2011, Rappaport2013, Heath2016, Orhan2015}, localization with a single base station (BS) can be seen as the ultimate localization strategy for 5G. With the high number of antennas, the directions of arrival (DOA) and departure (DOD) can be estimated with {a} very low error \cite{Larsen2009}, while the large bandwidth enables a highly accurate estimation of the time of arrival (TOA) \cite{Shen2007,Shen2010,Shen2010_2,Shen2010_3}, i.e., {a} low-error range estimate. Subsequently, combining the spatial and temporal estimates, the user equipment (UE) location\footnote{In this paper, we use the terms \textit{location/localization} and \textit{position/positioning} interchangeably.} can be estimated. {On the other hand, some papers consider mmWave channels estimation in the beamspace \cite{Huang2019,Gao2019, Fan2017}, so in principle, the AOA and AOD can be deduced directly from the channel estimate. However, the estimation in the beamspace does not show how to estimate the TOA.}

Recently, the accuracy of single-anchor\footnote{In mobile networks, anchor refers to the BS, whose position and orientation are known.} localization for 5G mmWave systems has been studied in several papers in terms of position (PEB) and orientation error bounds (OEB). {PEB and OEB are theoretical bounds that are used to benchmark location estimation techniques, and hence they are measures of the optimality of such techniques.} In \cite{Shahmansoori2017}, the UE PEB and OEB of 2D localization  were investigated using {uniform linear arrays} in 5G mmWave systems. Moreover, \cite{Guerra2017} and \cite{Zohair2017}  derived, with different approaches, the PEB and OEB for mmWave 3D localization using arrays with arbitrary geometry. The results in \cite{Shahmansoori2017, Guerra2017, Zohair2017} showed a 5G mmWave localization performance with an error in the order of centimeters. However, one important, yet usually overlooked, requirement for localization is the synchronization of BS and UE. For example, \cite{Shahmansoori2017} and \cite{Zohair2017} assume that the BS and UE are perfectly synchronized, while \cite{Guerra2017} assumes coarse synchronization, and includes a residual synchronization error in their localization model. Synchronization can be avoided by the use of two-way ranging methods \cite{Sahinoglu2008, Pelka2017, Joon2002}, where the time-of-flight is utilized to estimate the range and clock bias, or three-way ranging \cite{Sahinoglu2008} and multi-way ranging \cite{Duisterwinkel2017, Sark2015} to additionally estimate higher-order artifacts such as clock drift and skew. However, such methods have not been evaluated for mmWave systems. {Such systems possess different features, including highly sparse channels and directional transmission, making the estimation of the angles of arrival and departure as relevant to localization as the time of arrival. Our work is the first to consider such a scenario and investigate the associated two-way positioning performance that is a function of the spatio-temporal properties of the channel.}

{In this paper, we propose two-way localization (TWL), {whereby a known signal is transmitted from the first device, the BS or the UE, to the second device that, in turn, responds by sending another known signal, after which the relative location and orientation of the devices can be estimated.} We study the PEB and OEB under line-of-sight (LOS) communication for two protocols: (i) \textit{Round-trip Localization Protocol (RLP),} where the second device waits for a pre-agreed interval, from the time the first signal is received, before sending another signal to the first device, upon which localization is based; and (ii) \textit{Collaborative Localization Protocol (CLP),} where the second device sends back the received signal to the first device, and localization is based on both signals. {By their nature, these bounds are theoretical and serve as a means to determine performance benchmarks to assess location estimation techniques, to design localization systems, and to determine when the location and orientation can be potentially estimated.}  Our main contributions are:} 
{\begin{itemize}
\item Introducing RLP and CLP for LOS 5G mmWave signals and their analysis in terms of the localization bounds. 
\item For the two protocols, we derive the Fisher information matrices (FIMs) of the position and orientation, and consequently the PEB and OEB, with the timing bias between the BS and UE as a nuisance parameter. 
\item We investigate the impact of the number of antennas at BS and UE, as well as the bandwidth, and show that, in contrast to the standard two-way ranging methods \cite{Sahinoglu2008, Pelka2017, Joon2002}, the TWL performance in mmWave multiple-input multiple-output (MIMO) systems depends on the device that initiates the protocol. 
	\end{itemize}
}
\begin{figure}[!t]
	\centering
	\includegraphics[scale=0.8]{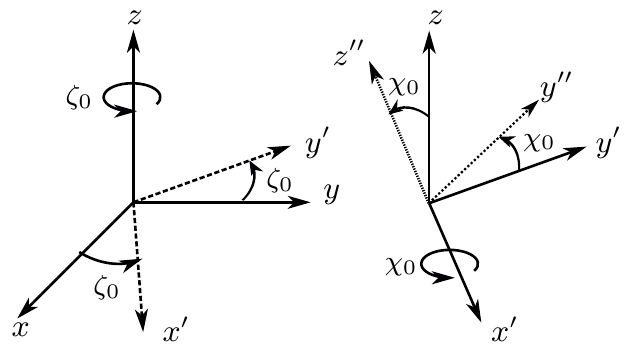}
	\caption{Two-step rotation: {The first rotation is around the $z$-axis, creating $x'$ and $y'$ axes. The second rotation is around $x'$, creating  $y''$ and $z''$ axes.}}
	\label{fig:rotation}
\end{figure}

The initial results of the RLP were presented briefly in \cite{Zohair_DLP}, while in this paper, while in this paper, we i) discuss RLP in more detail, ii) provide CLP as additional protocol, and iii) present more in-depth performance analysis and insightful results on both protocols.

The rest of the paper is organized as follows. The system model, including the considered geometry, channel model and beamforming, is described in Section II, while the proposed protocols are outlined in Section III. Subsequently, FIM basics are introduced at the outset of Section IV, before proceeding to derive the PEB and OEB for both protocols. The numerical simulation results are given in Section V, while the conclusions are highlighted in Section VI.
\section{System Model}\label{sec:sys_model}
\subsection{System Geometry}
Consider a BS located at the origin of the 3D space with zero-orientation angles, and a UE located at a fixed unknown position $\mathbf{p}\triangleq{[p_x,p_y,p_z]^\mathrm{T}}$ with unknown orientation angles $\mathbf{o}\triangleq{[\zeta_0,\chi_0]^\mathrm{T}}$. As illustrated in Fig.~\ref{fig:rotation}, we define $\zeta_0$ as the rotation angle around the $z-$axis, which yields new coordinate axes $x'$, $y'$ and $z$. Similarly, $\chi_0$ is defined as the rotation angle around the $x'-$axis. Both BS and UE are equipped with antenna arrays of arbitrary but known geometries and communicate through a mmWave channel.  

Although a device may have up to three rotation angles, we consider two angles because the estimation of three orientation angles is not possible with only LOS communication.  Hence, our formulation is representative of practical applications characterized by two rotation angles\footnote{This corresponds for instance to a vehicle that can turn left and right ($\zeta_0$) or ascend and descend ($\chi_0$), but not slip or flip.}, such as near-static\footnote{{We study positioning with a short signal snapshot,  during which the UE moves by a negligible distance. Subsequently, there has to be another layer where the snapshot positions are filtered through tracking techniques and mobility models, but this is out of the scope of this paper.}}

Our objective is to derive the performance bounds of estimating $\mathbf{p}$ and $\mathbf{o}$ via TOA, DOA, and DOD estimation, in the presence of the unknown nuisance parameters, i.e., the timing offset between the BS and UE clocks, ${B}$, and the unknown channel. This is done for the RLP and CLP protocols described in Section \ref{sec:protcols}. Our analysis considers the effect of all these unknown parameters. If a subset of the parameters is known, the bounds become lower and can be easily derived as special cases.

\subsection{Channel Model}
We consider protocols initiated by either the BS or UE. The device initiating the protocol is denoted by D$_1$, and the responding device by D$_2$. {In the presence of multipath, mmWave paths are orthogonal and information-additive \cite{Witrisal2016,Leitinger2015,Rico2018}, and hence do not interfere with one another. Moreover, the LOS path is stronger than the NLOS paths and hence provides the highest useful information in terms of positioning, while also being easy to isolate based on the signal power profile. Therefore, although we assume that the exchange of signals occurs via the LOS path, our analysis is valid even when there are NLOS paths. In any case, the presence of NLOS paths would assist localization, unlike in other systems, e.g., GPS, where multipath can limit the performance \cite{Witrisal2016,Leitinger2015,Rico2018}.}

\begin{rem}[Notation] All parameters related to D$_1$ and D$_2$ are denoted by the subscripts ``1" and ``2", respectively. Moreover, the superscripts ``f" and ``b" are used to relate the parameters to the forward and backward transmissions, respectively. Also, unless otherwise stated, all the provided times are with respect to the clock of D$_1$, which is considered a global clock. See Fig.~\ref{fig:model}.
\end{rem}

\begin{figure}[!t]
	\centering
	\includegraphics[scale=0.70]{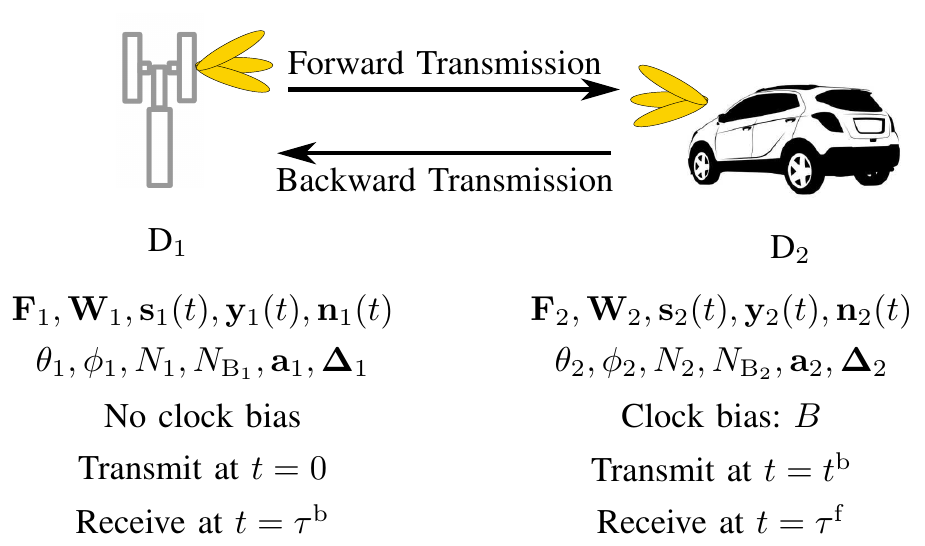}
	\caption{Summary of parameters at D$_1$ and D$_2$. Although D$_1$ and D$_2$ in the figure are BS and UE, this can be reversed.}
	\label{fig:model}
\end{figure}

Let $h^\mathrm{f}\triangleq\beta^\mathrm{f}\exp(j\psi^\mathrm{f})$ be the complex LOS path gain in the forward direction, $N_{\mathrm{1}}$ and $N_\mathrm{2}$ be the number of antennas at D$_1$ and D$_2$, respectively, and $(\theta_{\mathrm{1}},\phi_{\mathrm{1}})$ and $(\theta_{\mathrm{2}},\phi_{\mathrm{2}})$ be the forward DOD and DOA at D$_1$ and D$_2$, respectively. Also, define 
$\boldsymbol\vartheta\triangleq[\theta_{\mathrm{1}},\phi_{\mathrm{1}},\theta_{\mathrm{2}},\phi_{\mathrm{2}}]^\mathrm{T}$.

The \textit{forward} signal, from D$_1$ to D$_2$, undergoes a forward channel {given by \cite{Heath2016}}
\begin{align}
\mathbf{H}^\mathrm{f}(\boldsymbol\vartheta, \tau^\mathrm{f},h^\mathrm{f})&\triangleq\mathbf{H}^\mathrm{f}_\mathrm{s}(\boldsymbol\vartheta,h^\mathrm{f})\delta (t-{\tau^\mathrm{f}}),\in\mathbb{C}^{N_{\mathrm{2}}\times{N_\mathrm{1}}} \label{eq1}
\end{align}
where $\delta(t)$ is the Dirac delta function, $t=\tau^\mathrm{f}$ is the perceived TOA at D$_2$, and
\begin{align}
\mathbf{H}^\mathrm{f}_\mathrm{s}(\boldsymbol\vartheta,h^\mathrm{f})&\triangleq\sqrt{N_{\mathrm{1}}N_\mathrm{2}}h^\mathrm{f}\mathbf{a}_{\mathrm{2}}(\theta_{\mathrm{2}},\phi_{\mathrm{2}})\mathbf{a}^\mathrm{T}_{\mathrm{1}}(\theta_{\mathrm{1}},\phi_{\mathrm{1}}).\label{eq:channel_model_f}
\end{align}
$\mathbf{a}_{i}, i\in\{1,2\}$ is the response vectors at D$_i$ given by
\begin{align}
\mathbf{a}_{i}(\theta_{i},\phi_{i})&\triangleq\frac{1}{\sqrt{N_{i}}}e^{-j\boldsymbol{\Delta}_{i}^\mathrm{T}\mathbf{k}(\theta_{i},\phi_{i})},\qquad\in\mathbb{C}^{N_i}\label{eq:a_t}
\end{align}
where  $\mathbf{k}(\theta,\phi)=\frac{2\pi}{\lambda}[\cos\phi\sin\theta, \sin\phi\sin\theta, \cos\theta]^\mathrm{T}$ is the wavenumber vector, $\lambda$ is the wavelength,  $\boldsymbol\Delta_{i}\in\mathbb{C}^{3\times{N_i}}$ is a matrix whose columns contain the 3D Cartesian coordinates of the array elements of D$_i$ in meters. For brevity, we drop the angle parameters from the notation of $\mathbf{a}_{i}$. 

Similarly, the backward channel from D$_2$ to D$_1$ is defined as
\begin{align}
\mathbf{H}^\mathrm{b}(\boldsymbol\vartheta,\tau^\mathrm{b},h^\mathrm{b})&\triangleq\mathbf{H}^\mathrm{b}_\mathrm{s}(\boldsymbol\vartheta,h^\mathrm{b})\delta (t-{\tau^\mathrm{b}})\in\mathbb{C}^{N_{\mathrm{1}}\times{N_\mathrm{2}}},
\end{align}
where $h^\mathrm{b}\triangleq\beta^\mathrm{b}\exp(j\psi^\mathrm{b})$ and
\begin{align}
\mathbf{H}^\mathrm{b}_\mathrm{s}(\boldsymbol\vartheta,h^\mathrm{b})&\triangleq\sqrt{N_{\mathrm{1}}N_\mathrm{2}}h^\mathrm{b}\mathbf{a}_{\mathrm{1}}(\theta_{\mathrm{1}},\phi_{\mathrm{1}})\mathbf{a}^\mathrm{T}_{\mathrm{2}}(\theta_{\mathrm{2}},\phi_{\mathrm{2}}),\label{eq:channel_model_b}
\end{align}
where $\tau^\mathrm{b}$ denotes the local TOA at D$_1$.

{Note that \eqref{eq1}--\eqref{eq:channel_model_b} represent an accepted model for mmWave channels \cite{Heath2016}. Unlike cmWave channels, which experience rich scattering and relatively low propagation losses, mmWave channels are sparse and have high propagation losses, leading to weaker NLOS paths than LOS. Furthermore, due to the large temporal and spatial resolution of mmWave massive MIMO systems, reflections can be resolved if there are NLOS paths, and the parameters of the LOS can be estimated without noticeable impact from the NLOS \cite{Zohair2017}. Thus, for the sake of analysis, one can consider that the LOS-only situation is representative of scenarios where the reflections are resolvable, if present at all. For the cases where the LOS path is blocked, it has been shown recently that the probability of localization via NLOS paths alone is only about 12\% \cite{Lone2019}.}

\begin{figure*}[!t]
	\centering
	\includegraphics[scale=0.7]{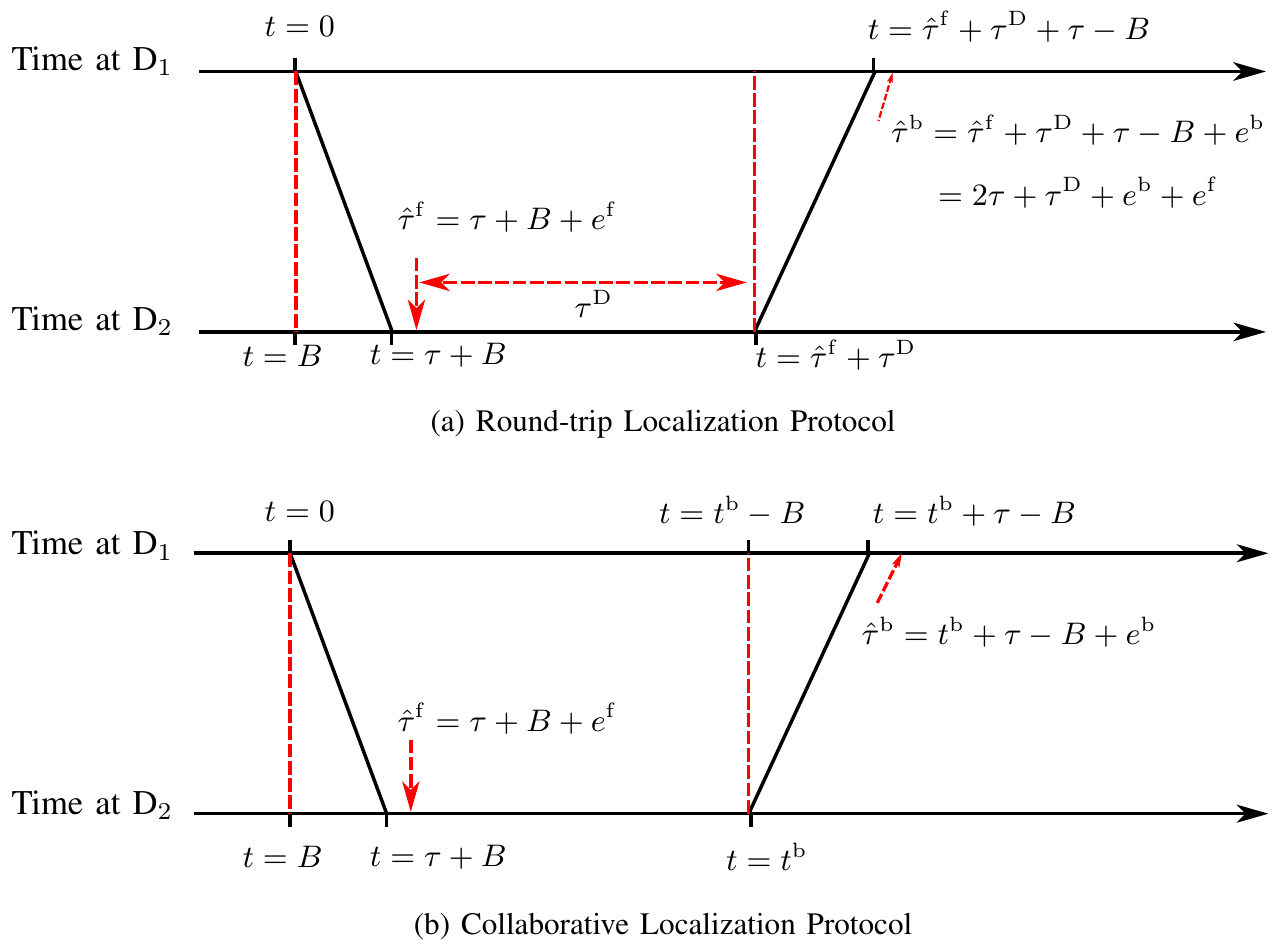}
	\caption{The timeline of the studied TWL protocols.}
	\label{fig:protocols}
\end{figure*}

\subsection{Precoding and Combining}
The signal transmitted from D$_1$ is modeled by $\sqrt{E_\mathrm{t}} \mathbf{F}_1\mathbf{s}_1(t)$, where $E_\mathrm{t}$ is the transmitted energy per symbol, and $\mathbf{F}_1\in\mathbb{C}^{N_\mathrm{1}\times{N_{\mathrm{B},1}}}$ is the transmit beamforming matrix at D$_1$ containing $N_{\mathrm{B},1}$ analog beamforming vectors. The pilot signal $\mathbf{s}_1(t)\triangleq[s_{1,1}(t),s_{1,2}(t),...,s_{1,N_{\mathrm{B}_1}}(t)]^\mathrm{T}$ is written as
\begin{align}
s_{1,b}(t)=\sum_{\ell=0}^{N_{\mathrm{s}}-1}a^{(b)}_{1,\ell}{g(t-\ell T_\mathrm{s})},\ 1\leq b\leq{N_{\mathrm{B_1}}}, \label{eq:txsignal}
\end{align}
where $a^{(b)}_{1,\ell}$ are known unit-energy pilot symbols transmitted over the $b^{\mathrm{th}}$ beam from D$_1$, and  $g(t)$ is a unit-energy pulse with a {symmetric} power spectral density (PSD), denoted by $|{G(f)}|^2$. In \eqref{eq:txsignal}, $N_{\mathrm{s}}$ is the number of pilot symbols and $T_{\mathrm{s}}$ is the symbol duration, leading to a total observation time of  $T_{\mathrm{o}} \approx N_{\mathrm{s}}T_{\mathrm{s}}$. Note that we keep the transmitted power fixed with $N_1$ by setting $\mathrm{Tr}\left(\mathbf{F}_1^\mathrm{H}\mathbf{F}_1\right)=1$, and   $\mathbf{s}_1(t)\mathbf{s}_1^\mathrm{H}(t)=\mathbf{I}_{N_{\mathrm{B_1}}}$, where $\mathrm{Tr}\left(\cdot\right)$ denotes the matrix trace, and $\mathbf{I}_{N_{\mathrm{B_1}}}$ is the $N_{\mathrm{B_1}}$-dimensional identity matrix.	Similarly, $\mathbf{W}_2\in\mathbb{C}^{N_\mathrm{2}\times{N_{\mathrm{B},2}}}$ denotes the receive beamforming matrix at D$_2$ containing $N_{\mathrm{B},2}$ analog beamforming vectors.

In backward transmission, D$_2$ transmits $\mathbf{s}_2(t)$ via a beamforming matrix, $\mathbf{F}_2$ containing $N_{\mathrm{B}_2}$ beams, while D$_1$ receives via a beamforming matrix, $\mathbf{W}_1$ containing $N_{\mathrm{B}_1}$ beams. Both $\mathbf{F}_2$ and $\mathbf{W}_1$ are defined similar to $\mathbf{W}_2$ and $\mathbf{F}_1$, respectively, but with possibly different beam directions.

\section{{Synchronization and Localization Protocols}}\label{sec:protcols}
In this section, {we discuss how clock synchronization can be addressed in 5G mmWave}. We start by presenting a general formulation of two-way localization, which we then specify for two different protocols with the aid of Fig. \ref{fig:protocols}.
\subsection{General Operation}
We take the clock at D$_1$ as a reference and assume that D$_2$ has a \textit{clock bias}\footnote{{Bias is modeled as an unknown constant, as we consider a snapshot observation over which it is assumed to remain unchanged.}}, ${B}$, with respect to it. We also denote the nominal TOA by $\tau=\|\mathbf{p}\|/c$, where $c$ is the speed of light.

During the \textbf{forward transmission}, the signal received after beamforming at D$_2$ is given by
\begin{align}
	\mathbf{y}_2(t)&=\sqrt{E_\mathrm{t}}\mathbf{W}_2^\mathrm{H}\mathbf{H}^\mathrm{f}_\mathrm{s}(h^\mathrm{f},\boldsymbol\vartheta)\mathbf{F}_1\mathbf{s}_1(t-\tau^\mathrm{f})+\mathbf{n}_2(t),\label{eq:y2}
\end{align}
where $\mathbf{n}_2(t)$ is zero-mean additive \textit{spatially-correlated} Gaussian noise, since the received signals are observed at the beamformer output. Therefore, the corresponding noise auto-covariance matrix is $\mathbf{R}_\mathrm{n2}=N_0\mathbf{W}_2^\mathrm{H}\mathbf{W}_2$, where $N_0$ is the noise PSD. We assume that $N_0$ is identical at BS and UE. Moreover, the delay at D$_2$ is 
\begin{align}
\tau^\mathrm{f}=\tau+B.\label{eq:tauf}
\end{align}

	Similarly, in the \textbf{backward transmission}, the signal received after beamforming at D$_1$ is
\begin{align}
\mathbf{y}_1(t)&=\sqrt{E_\mathrm{t}}\mathbf{W}_1^\mathrm{H}\mathbf{H}^\mathrm{b}_\mathrm{s}(\boldsymbol\vartheta,h^\mathrm{b})\mathbf{F}_2\mathbf{s}_2(t-\tau^\mathrm{b})+\mathbf{n}_1(t)\label{eq:y1},
\end{align}
where  $\mathbf{n}_1(t)$ has an auto-covariance matrix $\mathbf{R}_\mathrm{n1}=N_0\mathbf{W}_1^\mathrm{H}\mathbf{W}_1$. {Note that the backward transmission is initiated by D$_2$ at {a} time $t=t^\mathrm{b}$, and that the clock bias of D$_2$ observed at $D_1$ is $-B$}. Hence, the delay at D$_1$ is
\begin{align}
	\tau^\mathrm{b}=t^\mathrm{b}+\tau-{B}.\label{eq:taub}
\end{align} 

{There are different ways by which the synchronization of the response message from D$_2$ can be coordinated. In the following, we specify our formulation for two localization protocols, round-trip (RLP) and collaborative (CLP). While $\tau^\mathrm{f}$ is the same for CLP and RLP, their essential difference is in how each one defines $t^\mathrm{b}$, the instant at which D$_2$ sends the reply message (backward). For RLP, D$_2$ starts transmission after a pre-defined time-interval $\tau_\mathrm{D}$, taken with reference to its local clock, while in CLP, it starts transmission after $t^\mathrm{b}$, taken with reference to the clock of D$_2$.}

\subsection{Round-Trip Localization Protocol (RLP)}\label{sec:proptocols}
Under RLP, D$_2$ estimates ${\tau}^\mathrm{f}$ and waits for a pre-agreed delay $\tau^\mathrm{D}$ before transmitting back the signal $\mathbf{s}_2(t)$. In other words, 
\begin{align}\label{eq:tb}
t^\mathrm{b}=\hat{\tau}^\mathrm{f}+\tau^\mathrm{D},
\end{align} 
{where $(\hat\cdot)$ denotes the estimated value of a parameter.} See Fig.~\ref{fig:protocols}(a). We now introduce $e^\mathrm{f}\triangleq\hat{\tau}^\mathrm{f}-\tau^\mathrm{f}$ (and similarly $e^\mathrm{b}\triangleq\hat{\tau}^\mathrm{b}-\tau^\mathrm{b}$). Substituting \eqref{eq:tb} in \eqref{eq:taub}, then using \eqref{eq:tauf}, it can be shown that D$_1$ receives the signal $\mathbf{y}_1(t)$ at time
\begin{align}
{\tau}^\mathrm{b}&=\hat{\tau}^\mathrm{f}+\tau^\mathrm{D}+\tau-{B}=2\tau+e^\mathrm{f}+\tau^\mathrm{D},\label{eq:tau_b}
\end{align} 
 
Finally, based on $\mathbf{y}_1(t)$, D$_1$ estimates $\hat{\tau}^\mathrm{b}$ and eventually determines $\mathbf{p}$, and $\mathbf{o}$. Note that $B$ in the forward and backward transmissions cancel out and need not be estimated at D$_2$.

\subsection{Collaborative Localization Protocol (CLP)}	
Unlike RLP, {under CLP D$_2$ sends back a signal $\mathbf{s}_2(t)$ at a pre-agreed time instant $t=t^\mathbf{b}$. The value of $t=t^\mathbf{b}$ can be chosen to be large enough to avoid overlapping with the preceding transmission of $\mathbf{s}_1(t)$. Given that D$_2$ decides that the instant $t=t^\mathbf{b}$ has occurred based on its own clock, then the TOA measured by D$_1$ in its own time scale is given by \eqref{eq:taub}.}

In parallel, D$_1$ also receives $\mathbf{y}_2(t)$ via an error-free feedback\footnote{{To give a general exposition, we assume that the second signal is sent back entirely. However, there are some alternatives that facilitate obtaining the same bounds in a more practical way, like feeding back the parameters estimated from $\mathbf{y}_2(t)$ instead of the actual $\mathbf{y}_2(t)$.} {Any errors introduced in the transmission are assumed to be corrected via layers of coding and ARQ.}} link that can possibly be established using a microwave channel. Finally, based on $\mathbf{y}_1(t)$ \textit{and} $\mathbf{y}_2(t)$, D$_1$ estimates $\mathbf{p}$ and $\mathbf{o}$. Comparing \eqref{eq:tau_b} and \eqref{eq:taub}, it can be seen that $B$ needs to be estimated under CLP, unlike RLP. 

\section{Derivation of the Two-Way Position and Orientation Error Bounds } \label{sec:peb_oeb}
{After defining the system model and the communication protocols that govern the observations collection, we now proceed to define and derive PEB and OEB as performance metrics for the two protocols. These metrics are lower bounds on the performance of any estimator and can thus be used to benchmark localization algorithms. In fact, these bounds are tight for the problem under investigation. That is, the performance of well-designed practical algorithms approaches these bounds in the localization scenarios of interest \cite{Shahmansoori2017}. Therefore, analyzing the protocols in terms of the PEB and the OEB has the advantage of being representative of practical designs without the need for proposing detailed estimation algorithms. Moreover, since the PEB and OEB can often be computed in closed forms, another advantage is that they provide fundamental insights into the localization problem.}

{The PEB and OEB are derived from the FIM, a notion we discuss first in Section \ref{sec:basicFIM}. Then, we apply the FIM to the estimation of channel parameters in the forward and backward transmissions in Section \ref{sec:channelFIM}. This allows us to compute the PEB and OEB of RLP and CLP  in Sections \ref{sec:RLP} and \ref{sec:CLP}, respectively, and make a quantitative performance comparison in Section \ref{sec:comparison}. 
}

\subsection{Basic FIM Concepts}\label{sec:basicFIM}
{In this section, we digress to provide a brief introduction to the notion of FIM and Equivalent FIM (EFIM), useful in the analysis of the TWL protocols.} {For more background on Fisher information, the reader is referred to \cite{kay1993}.}

{Given a vector observation $\mathbf{y}$ and an unknown deterministic vector parameter $\boldsymbol{\theta}$, related by $\mathbf{y} = \mathbf{h}(\boldsymbol{\theta}) + \mathbf{n}$, where $\mathbf{n} \sim \mathcal{N}(\mathbf{0},\boldsymbol{\Sigma})$, with $\boldsymbol{\Sigma}$ independent of $\boldsymbol{\theta}$, then the FIM $\mathbf{J}_{\boldsymbol{\theta}}$ is a positive semi-definite matrix, defined as $\mathbf{J}_{\boldsymbol{\theta}} = \nabla_{\boldsymbol{\theta}}\mathbf{h}^{\mathrm{T}}(\boldsymbol{\theta})\boldsymbol{\Sigma}^{-1} (\nabla_{\boldsymbol{\theta}}\mathbf{h}(\boldsymbol{\theta}))$. Under certain regularity conditions, the inverse of the FIM (provided it exists) serves as a lower bound on the estimation error covariance of any unbiased estimator: 
	\begin{align}
	\mathbb{E}\{ (\hat{\boldsymbol{\theta}}-\boldsymbol{\theta})(\hat{\boldsymbol{\theta}}-\boldsymbol{\theta})^{\mathrm{T}}\} \succeq \mathbf{J}^{-1}_{\boldsymbol{\theta}},
	\end{align}
	where the expectation is over the noise and $\mathbf{A}\succeq \mathbf{B}$ means that $\mathbf{A}- \mathbf{B}$ is a positive semidefinite matrix. The Cram\'er-Rao lower bound (CRLB) is computed as the diagonal of the inverse FIM.  }

{If instead of $\boldsymbol{\theta}$ we need the FIM of $\boldsymbol{\phi}=f(\boldsymbol{\theta})$, we can apply a transformation on the FIM.
	\begin{definition}[FIM Transformation]\label{def:tfim}
Given the FIM $\mathbf{J}_{\boldsymbol{\theta}}$ and an injective mapping $\boldsymbol{\theta}=f(\boldsymbol{\phi})$, the FIM $\mathbf{J}_{\boldsymbol{\phi}}$ is given by  {\cite{kay1993}}
\begin{align} \label{eq:tfim}
\mathbf{J}_{\boldsymbol{\phi}} = \boldsymbol{\Upsilon}\mathbf{J}_{\boldsymbol{\theta}}\boldsymbol{\Upsilon}^{\mathrm{T}},
\end{align}
where $\boldsymbol{\Upsilon}$ is a Jacobian matrix with $[\boldsymbol{\Upsilon}]_{i,j}=\partial \boldsymbol{\theta}_i / \partial \boldsymbol{\phi}_j = \partial [f(\boldsymbol{\phi})]_i / \partial \boldsymbol{\phi}_j$.
	\end{definition}
}

{The EFIM is derived from the FIM, when we are interested only in part of the vector $\boldsymbol{\theta}$. 
	\begin{definition}[Equivalent FIM]\label{def:efim}
Given a parameter vector $\boldsymbol{\theta}\triangleq[\boldsymbol{\theta}_1^{\mathrm{T}},\boldsymbol{\theta}_2^{\mathrm{T}}]^{\mathrm{T}}$ with associated FIM
\begin{align}
\mathbf{J}_{\boldsymbol{\theta}}=\begin{bmatrix}\mathbf{J}_{\boldsymbol{\theta}_1}&\mathbf{J}_{\boldsymbol{\theta}_1\boldsymbol{\theta}_2}\\
\mathbf{J}^\mathrm{T}_{\boldsymbol{\theta}_1\boldsymbol{\theta}_2}&\mathbf{J}_{\boldsymbol{\theta}_2}
\end{bmatrix},
\end{align}
Then, the EFIM of $\boldsymbol{\theta}_1$ is given by Schur complement as  {\cite{Shen2010_2}}
\begin{align}
\mathbf{J}^{\mathrm{e}}_{\boldsymbol{\theta}_1}=\mathbf{J}_{\boldsymbol{\theta}_1}-\mathbf{J}_{\boldsymbol{\theta}_1\boldsymbol{\theta}_2}\mathbf{J}^{-1}_{\boldsymbol{\theta}_2}\mathbf{J}^\mathrm{T}_{\boldsymbol{\theta}_1\boldsymbol{\theta}_2}.
\end{align}
	\end{definition}
	Note that according to this definition, $\mathbf{J}_{\boldsymbol{\theta}_1}$ is the FIM of $\boldsymbol{\theta}_1$ if $\boldsymbol{\theta}_2$ were known, and $\mathbf{J}_{\boldsymbol{\theta}_1\boldsymbol{\theta}_2}\mathbf{J}^{-1}_{\boldsymbol{\theta}_2}\mathbf{J}^\mathrm{T}_{\boldsymbol{\theta}_1\boldsymbol{\theta}_2}$ is the loss of information due to the uncertainty of $\boldsymbol{\theta}_2$.}

\begin{definition}[PEB and OEB]
	Given the equivalent Fisher information matrix of the orientation and the position, {$\mathbf{J}_{\mathbf{o,p}}^\mathrm{e}\triangleq\mathbf{C}^{-1}\in\mathbb{R}^{5\times{5}}$}, then, the OEB and PEB are defined as \cite{Shen2010_2}
	\begin{subequations}
\begin{align}
\mathrm{OEB}&\triangleq\sqrt{{\left[\mathbf{C}\right]_{1,1}+\left[\mathbf{C}\right]_{2,2}}},\\
\mathrm{PEB}&\triangleq\sqrt{{\left[\mathbf{C}\right]_{3,3}+\left[\mathbf{C}\right]_{4,4}+\left[\mathbf{C}\right]_{5,5}}}
\end{align}
	\end{subequations}	
\end{definition}	
\subsection{General FIM for Channel Parameters}\label{sec:channelFIM}
For either the forward or the backward transmission, we can compute the FIM of the channel parameters. Focusing on the backward transmission, the FIM of  the channel parameters $\boldsymbol{\varphi}^\mathrm{b}\triangleq\left[\boldsymbol{\vartheta}^\mathrm{T},\psi^\mathrm{b}, \beta^\mathrm{b},\tau^\mathrm{b}\right]^\mathrm{T}$ from the observation $\mathbf{y}_1(t)$ is derived in Appendix \ref{sec:app_fim_dl} 
and is shown to be 
\begin{align}
\mathbf{J}_{\boldsymbol{\varphi}^\mathrm{b}}&\triangleq\begin{bmatrix}
\mathbf{J}^\mathrm{b}_{\mathrm{SS}}&\mathbf{0}_{5\times{2}}\\
\mathbf{0}_{2\times{5}}&\left[\begin{array}{cc} J_{\beta^\mathrm{b}}^\mathrm{b}&0\\0&J_\tau^\mathrm{b} \end{array}\right]
\end{bmatrix},\label{eq:partition_fim}
\end{align}
where,
\begin{align}
\mathbf{J}^\mathrm{b}_{\mathrm{SS}}=\begin{bmatrix}
\mathbf{J}_{\boldsymbol{\vartheta}}^\mathrm{b}&\mathbf{J}^\mathrm{b}_{\boldsymbol{\vartheta}\beta^\mathrm{b}}\\
\left(\mathbf{J}_{\boldsymbol{\vartheta}\beta^\mathrm{b}}^\mathrm{b}\right)^\mathrm{T}&J^\mathrm{b}_{\beta^\mathrm{b}}
\end{bmatrix},\label{eq:Jss}
\end{align}
is the FIM corresponding to the spatial parameters of $\mathbf{J}_{\boldsymbol{\varphi}^\mathrm{b}}$, such that
\begin{align}
\mathbf{J}_{\boldsymbol{\vartheta}}^\mathrm{b}&\triangleq
\begin{bmatrix}
J^\mathrm{b}_{\theta_1}&J^\mathrm{b}_{\theta_1\phi_1}&J^\mathrm{b}_{\theta_1\theta_2}&J^\mathrm{b}_{\theta_1\phi_2}\\
J^\mathrm{b}_{\theta_1\phi_1}&J^\mathrm{b}_{\phi_1}&J^\mathrm{b}_{\phi_1\theta_2}&J^\mathrm{b}_{\phi_1\phi_2}\\
J^\mathrm{b}_{\theta_1\theta_2}&J^\mathrm{b}_{\phi_1\theta_2}&J^\mathrm{b}_{\theta_2}&J^\mathrm{b}_{\theta_2\phi_2}\\
J^\mathrm{b}_{\theta_1\phi_2}&J^\mathrm{b}_{\phi_1\phi_2}&J^\mathrm{b}_{\theta_2\phi_2}&J^\mathrm{b}_{\phi_2}\\
\end{bmatrix},\label{eq:jthetatheta}\\
\left(\mathbf{J}_{\boldsymbol{\vartheta}\beta^\mathrm{b}}^\mathrm{b}\right)^\mathrm{T}&\triangleq
\begin{bmatrix}
J^\mathrm{b}_{\theta_1\beta^\mathrm{b}}&J^\mathrm{b}_{\phi_1\beta^\mathrm{b}}&J^\mathrm{b}_{\theta_2\beta^\mathrm{b}}&J^\mathrm{b}_{\phi_2\beta^\mathrm{b}}
\end{bmatrix}.
\end{align}

The FIM of $\boldsymbol{\varphi}^\mathrm{f}\triangleq\left[{\boldsymbol{\vartheta}}^\mathrm{T},\psi^\mathrm{f}, \beta^\mathrm{f},\tau^\mathrm{f}\right]^\mathrm{T}$ is obtained from the observation $\mathbf{y}_2(t)$ in the same way and exhibits the same structure, as highlighted at the end of Appendix \ref{app:proof_theo}.
{
\begin{rem}Due to the structure of \eqref{eq:partition_fim}, the delay is always independent of the other channel parameters and can thus be treated separately. It will be convenient to introduce the EFIM of the delay in forward and backward transmissions: we denote by $J_{\tau^\mathrm{f}}$ the EFIM of $\tau^\mathrm{f}$, obtained from applying Definition \ref{def:efim} to the FIM of $\left[{\boldsymbol{\vartheta}}^\mathrm{T},\psi^\mathrm{f}, \beta^\mathrm{f},\tau^\mathrm{f} \right]^\mathrm{T}$ based on the measurement $\mathbf{y}_2(t)$. Similarly, 
	we denote by $J_{\tau^\mathrm{b}}$ the EFIM of $\tau^\mathrm{b}$, obtained from applying Definition \ref{def:efim} to the FIM of $\left[\boldsymbol\vartheta^\mathrm{T},\psi^\mathrm{b}, \beta^\mathrm{b},\tau^\mathrm{b} \right]^\mathrm{T}$ based on the measurement $\mathbf{y}_1(t)$. Note that, by definition, 
	\begin{align}
	\quad\mathbb{E}\{\left(e^\mathrm{f}\right)^2\}\geq{J_{\tau^\mathrm{f}}^{-1}},\qquad &\mathbb{E}\{\left(e^\mathrm{b}\right)^2\}\geq{J_{\tau^\mathrm{b}}^{-1}}.\label{eq:fim_eb}
	\end{align}
\end{rem}
	}
\subsection{PEB and OEB for RLP}\label{sec:RLP}
To compute PEB and OEB, we first need the EFIM of the position and orientation, $\mathbf{J}_{\mathbf{o,p}}^{\mathrm{e,b}}$. However, $\mathbf{p}$ and $\mathbf{o}$ are functions of $\boldsymbol{\vartheta}$ and $\tau$ and, hence $\mathbf{J}_{\mathbf{o,p}}^{\mathrm{e,b}}$ can be obtained as a transformation of the EFIM of $\boldsymbol{\vartheta}$ and $\tau$ as outlined in Definition 1. Since the temporal and spatial parts in \eqref{eq:partition_fim} are independent, the EFIM of $\boldsymbol{\vartheta}$ and $\tau$ is given by
\begin{align}
\mathbf{J}_{\boldsymbol{\vartheta}\tau}^\mathrm{e,b}	=\begin{bmatrix}
\mathbf{J}_{\boldsymbol{\vartheta}}^\mathrm{e,b}&\mathbf{0}_4\\
\mathbf{0}_4^\mathrm{T}&J_{\tau}
\end{bmatrix}.\label{eq:schur}
\end{align}
We now outline how to obtain $\mathbf{J}_{\boldsymbol{\vartheta}\tau}^\mathrm{e,b}$ before transforming it to obtain $\mathbf{J}_{\mathbf{o,p}}^{\mathrm{e,b}}$.

It is straight-forward from \eqref{eq:Jss} that the EFIM of $\boldsymbol{\vartheta}$ based on the backward transmission is obtained using Schur complements as
\begin{align}
\mathbf{J}_{\boldsymbol{\vartheta}}^\mathrm{e,b}=\mathbf{J}_{\boldsymbol{\vartheta}}^\mathrm{b}-\frac{1}{{J}^\mathrm{b}_{\beta^\mathrm{b}}}\mathbf{J}_{\boldsymbol{\vartheta}\beta^\mathrm{b}}^\mathrm{b}\left(\mathbf{J}_{\boldsymbol{\vartheta}\beta^\mathrm{b}}^\mathrm{b}\right)^\mathrm{T}.\label{eq:fim_angles_e}
\end{align}

According to  \eqref{eq:tau_b}, $\tau$ depends on the estimate of $\tau^\mathrm{f}$ as well as the value of $\tau^\mathrm{b}$. While we can determine $J_{\tau^\mathrm{f}}$ based on $\mathbf{y}_2(t)$,  $J_{\tau^\mathrm{b}}$ is based on $\mathbf{y}_1(t)$. Therefore, to obtain the FIM of $\tau$ rather than $\tau^\mathrm{b}$ or $\tau^\mathrm{f}$, we apply the fact that, under RLP, the delays are not dependent on any of the other channel parameters. Towards that, recall that $\hat\tau^\mathrm{b}=2\tau+e^\mathrm{f}+e^\mathrm{b}+\tau^\mathrm{D},$ and define
\begin{align}
\tau'\triangleq\frac{\hat\tau^\mathrm{b}-\tau^\mathrm{D}}{2}=\tau+\frac{e^\mathrm{f}+e^\mathrm{b}}{2}.
\end{align}
Consequently, using \eqref{eq:fim_eb} yields
\begin{align}
\mathbb{E}\Big\{\left(\tau'-\tau\right)^2\Big\}\geq \frac{1}{4}\left(J_{\tau^\mathrm{f}}^{-1}+J_{\tau^\mathrm{b}}^{-1}\right),
\end{align}
that is,
\begin{align}
{J_{\tau}=4\left(J_{\tau^\mathrm{f}}^{-1}+J_{\tau^\mathrm{b}}^{-1}\right)^{-1}}.\label{eq:jatua_dlp}
\end{align}
Note that in this scenario, the estimation of $\tau$ is less accurate than the estimation of $\tau^\mathrm{b}$ due to its further dependence of $\tau^\mathrm{f}$.

Applying Definition \ref{def:tfim}, to \eqref{eq:schur}, it can be shown that
\begin{align}
\mathbf{J}_{\mathbf{o,p}}^{\mathrm{e,b}}{|_\mathrm{RLP}}=\boldsymbol\Upsilon\mathbf{J}_{\boldsymbol{\vartheta}\tau}^\mathrm{e,b}\boldsymbol\Upsilon^{\mathrm{T}},
\end{align}
where 
\begin{align}\label{eq:transformation_matrix}
\boldsymbol{\Upsilon}&\triangleq\left[\begin{array}{cccc;{2pt/2pt}c}
\frac{\partial\theta_1}{\partial\mathbf{o}}&\frac{\partial\phi_1}{\partial\mathbf{o}}&\frac{\partial\theta_2}{\partial\mathbf{o}}&\frac{\partial\phi_2}{\partial\mathbf{o}}&\frac{\partial\tau}{\partial\mathbf{o}}\\
\frac{\partial\theta_1}{\partial\mathbf{p}}&\frac{\partial\phi_1}{\partial\mathbf{p}}&\frac{\partial\theta_2}{\partial\mathbf{p}}&\frac{\partial\phi_2}{\partial\mathbf{p}}&\frac{\partial\tau}{\partial\mathbf{p}}\\
\end{array}\right]=\left[\begin{array}{c;{2pt/2pt}c}
\boldsymbol{\Upsilon}_\mathrm{s}&\boldsymbol{\Upsilon}_\tau\end{array}\right].
\end{align}
Consequently, for RLP, we can isolate the spatial and temporal parts and write,
\begin{align}\label{eq:FIM_OP_RLP}
\mathbf{J}_{\mathbf{o,p}}^{\mathrm{e,b}}{|_\mathrm{RLP}}&=\underbrace{\boldsymbol{\Upsilon}_\mathrm{s}\mathbf{J}_{\boldsymbol{\vartheta}}^\mathrm{e,b}\boldsymbol{\Upsilon}_\mathrm{s}}_\text{Spatial Part}+\underbrace{{J}_{\tau}\boldsymbol{\Upsilon}_\tau\boldsymbol{\Upsilon}_\tau^\mathrm{T}}_\text{Temporal Part}.
\end{align}
{The entries of $\boldsymbol{\Upsilon}_\tau$ and $\boldsymbol{\Upsilon}_\mathrm{s}^{\mathrm{b}}$ are easily obtained from the relations mapping from location parameters to channel parameters and can be found in \cite{Zohair2017}, where it was concluded that $\boldsymbol{\Upsilon}_\tau$ is identical for the uplink and downlink, while $\boldsymbol{\Upsilon}_\mathrm{s}^{\mathrm{b}}$ is not. This results in an asymmetry in the spatial part of the FIM. To understand the implication of this asymmetry, we note that the UE position in the uplink is a function of the DOA and TOA, while in the downlink, it is a function of the DOD and TOA. However, DOD and DOA have different CRLBs, which means that the RLP localization performance in \eqref{eq:FIM_OP_RLP} depends on whether the localization is executed in the uplink (at BS) or downlink (at UE).}

\subsection{PEB and OEB for CLP}\label{sec:CLP}
As can be inferred from \eqref{eq:taub}, we have to retrieve ${B}$ in CLP, in contrast to the RLP case. Therefore, we the vector of unknown parameters is
\begin{align}
\boldsymbol{\varphi}_\mathrm{C}\triangleq\left[\boldsymbol{\vartheta}^\mathrm{T},\psi^\mathrm{b},\beta^\mathrm{b},\psi^\mathrm{f},\beta^\mathrm{f},\tau,{B}\right]^\mathrm{T}.
\end{align}
To simplify the derivation, we treat the temporal parameters, ($\tau$ and $B$), in isolation of the spatial parameters ($\boldsymbol{\vartheta}^\mathrm{T},\psi^\mathrm{b},\beta^\mathrm{b},\psi^\mathrm{f},\beta^\mathrm{f}$) because both sets  are independent. Similar to \eqref{eq:schur}, we seek to compute
\begin{align}\label{eq:efim_clp}
\mathbf{J}^\mathrm{e}_{\boldsymbol\vartheta\tau}\triangleq\begin{bmatrix}
\mathbf{J}^\mathrm{e}_{\boldsymbol\vartheta}& \mathbf{0}_4\\
\mathbf{0}_4^\mathrm{T}&J^\mathrm{e}_\tau
\end{bmatrix},
\end{align}
where $\mathbf{J}^\mathrm{e}_{\boldsymbol\vartheta}$ is to be computed from FIM of the spatial parameters  and $J^\mathrm{e}_\tau$ is to be computed from the temporal ones.

{Since D$_2$ transmission time is independent of the TOA of $\mathbf{y}_2(t)$, the two transmissions occur in non-overlapping time slots, and the noise is independent at both sides, the forward and backward transmissions can be considered independent. Therefore, the FIMs can be added according to the following Theorem.} 
\begin{theorem}
{Consider a random process to observe the unknown parameter $\mathbf{x}$ along with the unknown nuisance parameter $\mathbf{z}_1$. Consider also another random process to observe $\mathbf{x}$ along with the unknown nuisance parameter $\mathbf{z}_2$. If both processes are independent and $\mathbf{z}_1$ and $\mathbf{z}_2$, are independent, then total EFIM of $\mathbf{x}$ is
\begin{align}
    \mathbf{J}^\mathrm{e}_\mathbf{x}=\mathbf{J}^\mathrm{e,1}_\mathbf{x}+\mathbf{J}^\mathrm{e,2}_\mathbf{x},
\end{align}
    where $\mathbf{J}^\mathrm{e,i}_\mathbf{x}$ is the EFIM of $\mathbf{x}$ obtained from the $i$-th process.}
\end{theorem}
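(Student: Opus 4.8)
The plan is to stack the two nuisance vectors into a single nuisance block and exploit the two independence hypotheses to show that this block is block diagonal; the Schur complement of Definition~\ref{def:efim} then separates additively. First I would introduce the concatenated parameter $\boldsymbol{\theta}\triangleq[\mathbf{x}^{\mathrm{T}},\mathbf{z}_1^{\mathrm{T}},\mathbf{z}_2^{\mathrm{T}}]^{\mathrm{T}}$, where the disjointness (independence) of $\mathbf{z}_1$ and $\mathbf{z}_2$ guarantees they occupy separate coordinates, and write the FIM of $\boldsymbol{\theta}$ obtained from the two observations jointly. Stacking the two measurements and using that the noises are independent makes the joint covariance block diagonal; plugging this into the Gaussian FIM formula $\mathbf{J}_{\boldsymbol\theta}=\nabla_{\boldsymbol\theta}\mathbf{h}^{\mathrm{T}}\boldsymbol{\Sigma}^{-1}\nabla_{\boldsymbol\theta}\mathbf{h}$ of Section~\ref{sec:basicFIM} yields the additivity $\mathbf{J}_{\boldsymbol\theta}=\mathbf{J}^{(1)}+\mathbf{J}^{(2)}$, where $\mathbf{J}^{(i)}$ is the per-process FIM evaluated at the full $\boldsymbol\theta$.

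Next I would record the structural zeros. Since the first process does not depend on $\mathbf{z}_2$, its mean has zero gradient with respect to $\mathbf{z}_2$, so every block of $\mathbf{J}^{(1)}$ touching $\mathbf{z}_2$ vanishes; symmetrically, every block of $\mathbf{J}^{(2)}$ touching $\mathbf{z}_1$ vanishes. Adding the two, and writing $\mathbf{J}^i_{\mathbf{x}}$, $\mathbf{J}^i_{\mathbf{x}\mathbf{z}_i}$, $\mathbf{J}^i_{\mathbf{z}_i}$ for the surviving blocks of process $i$, the joint FIM takes the form
\begin{align}
\mathbf{J}_{\boldsymbol{\theta}}=\begin{bmatrix}
\mathbf{J}^1_\mathbf{x}+\mathbf{J}^2_\mathbf{x} & \mathbf{J}^1_{\mathbf{x}\mathbf{z}_1} & \mathbf{J}^2_{\mathbf{x}\mathbf{z}_2}\\
(\mathbf{J}^1_{\mathbf{x}\mathbf{z}_1})^{\mathrm{T}} & \mathbf{J}^1_{\mathbf{z}_1} & \mathbf{0}\\
(\mathbf{J}^2_{\mathbf{x}\mathbf{z}_2})^{\mathrm{T}} & \mathbf{0} & \mathbf{J}^2_{\mathbf{z}_2}
\end{bmatrix},
\end{align}
so that the combined nuisance block $\diag(\mathbf{J}^1_{\mathbf{z}_1},\mathbf{J}^2_{\mathbf{z}_2})$ is block diagonal, the $\mathbf{z}_1$--$\mathbf{z}_2$ coupling being zero because neither process sees both nuisance vectors at once.

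Then I would apply Definition~\ref{def:efim} with $\boldsymbol{\theta}_1=\mathbf{x}$ and $\boldsymbol{\theta}_2=[\mathbf{z}_1^{\mathrm{T}},\mathbf{z}_2^{\mathrm{T}}]^{\mathrm{T}}$. Because the nuisance block is block diagonal, its inverse is $\diag((\mathbf{J}^1_{\mathbf{z}_1})^{-1},(\mathbf{J}^2_{\mathbf{z}_2})^{-1})$, and the Schur-complement correction factors, with no cross terms, into the sum $\mathbf{J}^1_{\mathbf{x}\mathbf{z}_1}(\mathbf{J}^1_{\mathbf{z}_1})^{-1}(\mathbf{J}^1_{\mathbf{x}\mathbf{z}_1})^{\mathrm{T}}+\mathbf{J}^2_{\mathbf{x}\mathbf{z}_2}(\mathbf{J}^2_{\mathbf{z}_2})^{-1}(\mathbf{J}^2_{\mathbf{x}\mathbf{z}_2})^{\mathrm{T}}$. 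Subtracting this from $\mathbf{J}^1_\mathbf{x}+\mathbf{J}^2_\mathbf{x}$ and regrouping the process-$1$ and process-$2$ terms gives exactly $\mathbf{J}^{\mathrm{e},1}_\mathbf{x}+\mathbf{J}^{\mathrm{e},2}_\mathbf{x}$, each summand being the per-process EFIM of Definition~\ref{def:efim}, which is the claimed identity.

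The main obstacle is really the justification of the structural zeros, i.e., that the couplings ($\mathbf{x}$--$\mathbf{z}_2$ in process~$1$, $\mathbf{x}$--$\mathbf{z}_1$ in process~$2$, and $\mathbf{z}_1$--$\mathbf{z}_2$ throughout) all vanish, since this is where both hypotheses are consumed: independence of the processes supplies the additivity $\mathbf{J}_{\boldsymbol\theta}=\mathbf{J}^{(1)}+\mathbf{J}^{(2)}$, while the independence of $\mathbf{z}_1$ and $\mathbf{z}_2$ together with each observation depending on only one of them forces the block-diagonal nuisance structure. Once that structure is secured, the remaining manipulation is the elementary Schur-complement factorization above and carries no analytical difficulty.
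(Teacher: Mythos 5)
Your proposal is correct and follows essentially the same route as the paper's own proof: both concatenate the parameters into $[\mathbf{x}^{\mathrm{T}},\mathbf{z}_1^{\mathrm{T}},\mathbf{z}_2^{\mathrm{T}}]^{\mathrm{T}}$, use independence of the observations to add the per-process FIMs (each padded with structural zeros in the blocks involving the unobserved nuisance), and then exploit the resulting block-diagonal nuisance block so that the Schur complement splits into the two per-process corrections. The regrouping into $\mathbf{J}^{\mathrm{e},1}_{\mathbf{x}}+\mathbf{J}^{\mathrm{e},2}_{\mathbf{x}}$ at the end is exactly the paper's concluding step.
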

\begin{proof}
    See Appendix \ref{app:proof_theo}.
\end{proof}
In other words, the EFIM of $\boldsymbol{\vartheta}$ can be written as $\mathbf{J}^\mathrm{e}_{\boldsymbol\vartheta}=\mathbf{J}^\mathrm{e,b}_{\boldsymbol\vartheta}+\mathbf{J}^\mathrm{e,f}_{\boldsymbol\vartheta}$ by summing the EFIMs of $\boldsymbol\vartheta$ computed from $\mathbf{y}_1(t)$ and $\mathbf{y}_2(t)$. From \eqref{eq:Jss}, it follows that
\begin{align*}
\mathbf{J}^\mathrm{e}_{\boldsymbol\vartheta}=\mathbf{J}_{\boldsymbol{\vartheta}}^\mathrm{b}+\mathbf{J}_{\boldsymbol{\vartheta}}^\mathrm{f}-\frac{1}{{J}^\mathrm{b}_{\beta^\mathrm{b}}}\mathbf{J}_{\boldsymbol{\vartheta}\beta^\mathrm{b}}^\mathrm{b}\left(\mathbf{J}_{\boldsymbol{\vartheta}\beta^\mathrm{b}}^\mathrm{b}\right)^\mathrm{T}-\frac{1}{{J}^\mathrm{f}_{\beta^\mathrm{f}}}\mathbf{J}_{\boldsymbol{\vartheta}\beta^\mathrm{f}}^\mathrm{f}\left(\mathbf{J}_{\boldsymbol{\vartheta}\beta^\mathrm{f}}^\mathrm{f}\right)^\mathrm{T}.
\end{align*}

Moreover, $J^\mathrm{e}_\tau$ can be obtained noting that in the backward transmission $\tau^\mathrm{b} = t^\mathrm{b} + \tau -B$, while in the forward transmission $\tau^\mathrm{f} = \tau + B$, and that $\tau$ is independent of any other parameters. Hence, using the transformation of parameters and the fact that the two transmissions are independent, we can write the FIM of $\left[\tau,B\right]^\mathrm{T}$ as
\begin{align}
\mathbf{J}_{\tau B}=J_{{\tau}^\mathrm{b}}\left[\begin{array}{cc} 1&-1\\-1&1 \end{array}\right]+J_{{\tau}^\mathrm{f}} \left[\begin{array}{cc} 1&1\\1&1 \end{array}\right]
\end{align}
from which EFIM of $\tau$ is obtained by Schur complement as
\begin{align}
J^{\mathrm{e}}_{\tau}&=4\left(J^{-1}_{{\tau}^\mathrm{f}}+J^{-1}_{{\tau}^\mathrm{b}}\right)^{-1}.
\label{eq:fim_tau_e}
\end{align}
{It is interesting to see that the temporal information represented by $J_{\tau}$  is identical for both CLP \eqref{eq:fim_tau_e} and RLP \eqref{eq:jatua_dlp}. Therefore, any performance difference between these two protocols is attributed to the spatial information only. }

We now derive the EFIM of the position and orientation. Based on \eqref{eq:efim_clp} and Definition \ref{def:efim}
\begin{align}
\mathbf{J}_{\mathbf{o,p}}^{\mathrm{e}}{|_\mathrm{CLP}}&=\boldsymbol\Upsilon\mathbf{J}_{\boldsymbol{\vartheta}\tau}^\mathrm{e}\boldsymbol\Upsilon^{\mathrm{T}},\notag\\
&=\underbrace{\boldsymbol{\Upsilon}_\mathrm{s}\mathbf{J}_{\boldsymbol{\vartheta}}^\mathrm{e,f}\boldsymbol{\Upsilon}_{\mathrm{s}}^\mathrm{T}}_\text{Forward Spatial Part}+\underbrace{\boldsymbol{\Upsilon}_\mathrm{s}\mathbf{J}_{\boldsymbol{\vartheta}}^\mathrm{e,b}\boldsymbol{\Upsilon}_{\mathrm{s}}^\mathrm{T}}_\text{Backward Spatial Part}+\underbrace{{J}_{\tau}\boldsymbol{\Upsilon}_\tau\boldsymbol{\Upsilon}_\tau^\mathrm{T}}_\text{Temporal Part}.\label{eq:FIM_OP_CLP}
\end{align}

Note that \eqref{eq:FIM_OP_CLP} comprises three terms: two terms related to the spatial information in the forward and backward transmissions, and one term related to the temporal information.

\subsection{Performance Comparison of RLP, CLP and OWL}\label{sec:comparison}
{The EFIM of position and orientation under RLP is given in \eqref{eq:FIM_OP_RLP}, while that under CLP is given in \eqref{eq:FIM_OP_CLP}. In this section, we compare the performance of these two protocols with the standard one-way localization (OWL) from \cite{Zohair2017}, where it was shown that 
	\begin{align}\label{eq:FIM_OP_OWL}
	{\mathbf{J}^{\mathrm{e},\mathrm{b}}_\mathbf{o,p}|_\mathrm{OWL}=\underbrace{\boldsymbol{\Upsilon}_\mathrm{s}\mathbf{J}_{\boldsymbol{\vartheta}}^\mathrm{e,b}\boldsymbol{\Upsilon}_{\mathrm{s}}^\mathrm{T}}_{\text{Spatial Part}}+\underbrace{{J}_{\tau^\mathrm{b}}\boldsymbol{\Upsilon}_\tau\boldsymbol{\Upsilon}_\tau^\mathrm{T}}_{\text{Temporal Part}},}
	\end{align}
	under the assumption of perfect synchronization between the two devices (i.e., $B=0$). 
}
\subsubsection{RLP vs. CLP} 
{Comparing RLP to CLP, we note that \eqref{eq:FIM_OP_RLP} contains only one spatial information term, related to the backward transmission, and another temporal information term. These two terms are equal to their counterparts in \eqref{eq:FIM_OP_CLP}. Hence, $	\mathbf{J}_{\mathbf{o,p}}^\mathrm{e}|_\mathrm{CLP} \succ \mathbf{J}_{\mathbf{o,p}}^{\mathrm{e,b}}{|_\mathrm{RLP}}$, meaning that CLP will always outperform RLP. Nevertheless, CLP requires additional overhead, as it involves sending back the waveform $\mathbf{y}_2(t)$ to D$_1$ (or estimated parameters with uncertainty) and thus requires an additional data transmission.}

\subsubsection{RLP vs. OWL} Inspecting \eqref{eq:FIM_OP_OWL}, it can be seen that $\mathbf{J}^{\mathrm{e},\mathrm{b}}_\mathbf{o,p}|_\mathrm{OWL}$ has the same expression as \eqref{eq:FIM_OP_RLP}, but with
\begin{align}
{J_{\tau}=J_{\tau^\mathrm{b}}.}\label{eq:fimOWL}
\end{align}
{This means the both RLP and OWL have the same spatial information but differ in the temporal information. However, it is not clear which protocol is superior. Therefore,} we provide the following proposition. 
\begin{proposition}\label{prop:1}
	RLP outperforms OWL if $	J_{\tau^\mathrm{f}}>\frac{1}{3}J_{\tau^\mathrm{b}}$.
\end{proposition}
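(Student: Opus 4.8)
The plan is to exploit the fact that the equivalent FIMs in \eqref{eq:FIM_OP_RLP} and \eqref{eq:FIM_OP_OWL} share an identical spatial part and differ only through the scalar weighting of a common rank-one temporal term. I read ``RLP outperforms OWL'' in the Loewner sense: since a larger equivalent FIM yields a smaller inverse and hence smaller diagonal entries, it produces simultaneously a lower PEB and a lower OEB. It therefore suffices to establish $\mathbf{J}_{\mathbf{o,p}}^{\mathrm{e,b}}|_\mathrm{RLP} \succeq \mathbf{J}^{\mathrm{e},\mathrm{b}}_\mathbf{o,p}|_\mathrm{OWL}$.

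First I would subtract the two matrices. Because the spatial parts $\boldsymbol{\Upsilon}_\mathrm{s}\mathbf{J}_{\boldsymbol{\vartheta}}^\mathrm{e,b}\boldsymbol{\Upsilon}_{\mathrm{s}}^\mathrm{T}$ coincide, the difference collapses to
\begin{align}
\mathbf{J}_{\mathbf{o,p}}^{\mathrm{e,b}}|_\mathrm{RLP} - \mathbf{J}^{\mathrm{e},\mathrm{b}}_\mathbf{o,p}|_\mathrm{OWL} = \left(J_\tau - J_{\tau^\mathrm{b}}\right)\boldsymbol{\Upsilon}_\tau\boldsymbol{\Upsilon}_\tau^\mathrm{T},
\end{align}
with $J_\tau$ given by \eqref{eq:jatua_dlp}. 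The matrix $\boldsymbol{\Upsilon}_\tau\boldsymbol{\Upsilon}_\tau^\mathrm{T}$ is a nonzero rank-one outer product, hence positive semidefinite with a single nonzero eigenvalue $\|\boldsymbol{\Upsilon}_\tau\|^2$; consequently the scalar multiple above is positive semidefinite if and only if the factor $J_\tau - J_{\tau^\mathrm{b}}$ is nonnegative. This reduces the matrix inequality to the scalar condition $J_\tau \geq J_{\tau^\mathrm{b}}$.

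The final step is to translate $J_\tau \geq J_{\tau^\mathrm{b}}$ into a condition on $J_{\tau^\mathrm{f}}$ and $J_{\tau^\mathrm{b}}$. Substituting $J_\tau = 4(J_{\tau^\mathrm{f}}^{-1}+J_{\tau^\mathrm{b}}^{-1})^{-1} = 4J_{\tau^\mathrm{f}}J_{\tau^\mathrm{b}}/(J_{\tau^\mathrm{f}}+J_{\tau^\mathrm{b}})$ and dividing through by the positive quantity $J_{\tau^\mathrm{b}}$, the strict inequality $J_\tau > J_{\tau^\mathrm{b}}$ becomes $4J_{\tau^\mathrm{f}} > J_{\tau^\mathrm{f}}+J_{\tau^\mathrm{b}}$, i.e., $3J_{\tau^\mathrm{f}} > J_{\tau^\mathrm{b}}$, which is exactly $J_{\tau^\mathrm{f}} > \frac{1}{3}J_{\tau^\mathrm{b}}$. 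I do not anticipate a serious obstacle, as the argument is algebraic once the structural observation is made; the only point needing care is the justification that Loewner dominance of the FIM implies smaller PEB and OEB, which rests on the monotonicity of matrix inversion under the Loewner order together with the fact that both bounds are built from diagonal entries of the inverse. I would also note that, because the difference is rank-one and scalar, the sign condition is both sufficient and necessary, so the threshold $\frac{1}{3}J_{\tau^\mathrm{b}}$ is sharp.
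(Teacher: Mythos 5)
Your proof is correct and takes essentially the same route as the paper's: both observe that the spatial parts of \eqref{eq:FIM_OP_RLP} and \eqref{eq:FIM_OP_OWL} coincide, reduce the comparison to the temporal scalars $J_{\tau}$ versus $J_{\tau^\mathrm{b}}$, and perform the identical algebra to arrive at $J_{\tau^\mathrm{f}}>\frac{1}{3}J_{\tau^\mathrm{b}}$. Your additional steps --- the rank-one positive-semidefinite structure of $\boldsymbol{\Upsilon}_\tau\boldsymbol{\Upsilon}_\tau^\mathrm{T}$ and the Loewner-order monotonicity of matrix inversion --- simply make rigorous what the paper's one-line proof leaves implicit.
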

\begin{proof}
	Comparing RLP with OWL, it can be seen that they have equal spatial, but different temporal information. Comparing \eqref{eq:jatua_dlp} with \eqref{eq:fimOWL}, for RLP to outperform OWL, we should have
	\begin{align*}
	J_{\tau^\mathrm{b}}&<4\left(J_{\tau^\mathrm{f}}^{-1}+J_{\tau^\mathrm{b}}^{-1}\right)^{-1}=J_{\tau^\mathrm{b}}\frac{4J_{\tau^\mathrm{f}}}{J_{\tau^\mathrm{f}}+J_{\tau^\mathrm{b}}},
	\end{align*} 
	which leads to	$J_{\tau^\mathrm{f}}>\frac{1}{3}J_{\tau^\mathrm{b}}$.
\end{proof}
This means that, when the bandwidth is equal in both directions, the forward link should have at least one third the signal-to-noise ratio (SNR) of the backward link for RLP to outperform OWL. From 
\eqref{eq:tau_b_tau_f}, it can be seen that this mainly depends on the transmit and receive beamforming. However, under the general case of non-identical bandwidth allocation, \eqref{eq:tau_b_tau_f} can be used to determine the values of bandwidth and SNR that satisfy the condition in Proposition 1.

\subsection{{Relationship of RLP and CLP with Channel Parameters}}\label{sec:RLP_CLP_channel_parameters}
{Since the derived position and orientation bounds are obtained through transforming the channel parameter bounds, the localization performance is ultimately affected by the channel parameter estimation accuracy. In \cite{Zohair2017}, it was concluded that the squared PEB is the sum of the CRLBs of TOA and the BS angle (DOA in the uplink or DOD in the downlink). It was also concluded that the CRLB of DOA is better than the CRLB of DOD. Extending these results to the RLP and CLP in this paper, it can be seen that from \eqref{eq:FIM_OP_RLP}, the RLP performance is governed by the backward transmission from D$_2$ to D$_1$. In other words, if the backward transmission is uplink, D$_1$ is a BS, whose angle is a DOA, which leads to a better PEB. Note that in such a case, DOD estimation error does not affect the localization performance. The opposite is true if D$_1$ is a UE.} 
	
For CLP, it can be seen that from \eqref{eq:FIM_OP_CLP} that the squared PEB is the sum of the CRLB of DOA, DOD and TOA, meaning that regardless of the BS and UE roles, the PEB is affected by the error of estimating all of these three parameters.

{The squared OEB is the sum of the CRLBs of DOA and DOD \cite{Zohair2017}, and hence it is not affected by the accuracy of the TOA estimation.}

\subsection{{Insights on NLOS}}
When the signal propagation occurs in mixed propagation environment (LOS and NLOS), the delay of the LOS path, being the first and strongest path, can still be separated and identified, while the delays of NLOS paths must be subsequently estimated. For RLP, the NLOS paths in the backward transmission can assist the positioning, as shown in \cite{Zohair2017}. On the other hand, for CLP, the parameters of the NLOS paths in the forward and backward transmissions can be estimated separately. However, this will give rise to a path association problem, whereby the set of parameters estimated from the forward transmission have to be paired with their counter parts in the backward transmission and with the scatterers or reflectors in the environment to re-establish the different paths. Moreover, when the LOS is obstructed, the localization performance is severely degraded \cite{Lone2019}.
\section{Simulation Results and Discussion}\label{sec:sims}
\subsection{Simulation Environment}
\subsubsection{System Layout and Channel} In our simulations, we investigate and compare the RLP and CLP using the position and orientation error bounds to quantify the estimation accuracy. Since both protocols involve forward and backward transmission, we selected an equal number of antennas at both the BS and the UE to make the comparison of these protocols fair\footnote{{It is understood that BSs can accommodate more complexity, and its array can have a larger size, such as that in \cite{Fan2017} where 10,000 antennas are used. However, we use an equal number of antennas on D$_1$ and D$_2$ in order to explore the intrinsic differences between the protocols. These differences result from the lack of symmetry between the two links because the orientation is only known for one device but not the other. Thus, we prevent masking our protocol analysis by using the same number of antennas.}}. Towards that, we consider a BS and a UE both with $12\times{12}$ uniform rectangular antenna array (URA) communicating via a LOS. Moreover, we assume that the BS array is located in the $xz$-plane centered about the origin $[0,0,0]^\mathrm{T}$, thus has orientation angles of $[0^\circ,0^\circ]^\mathrm{T}$. On the other hand, the UE moves freely within a diamond-shape $120^\circ$ defined by the vertices $\{(0,0,-10),(25\sqrt{3},25,-10),(0,50,-10),(-25\sqrt{3},25,-10)\}$. That is, the BS height is 10 meters. We focus on two cases of orientation angles with respect to the $z$-axis and $x$-axis: $\mathbf{o}=[\chi_0,\zeta_0]=[0^\circ,0^\circ]^\mathrm{T}$ and $\mathbf{o}=[30^\circ,30^\circ]^\mathrm{T}$ as specified in the context. Finally, at a distance $d_1$, the channel gain is modeled as $\beta^\mathrm{b}=\beta^\mathrm{f}=\frac{\lambda}{4{\pi}d_1}$ \cite{Goldsmith2005}.
\subsubsection{Transmit-Receive Model} We select the mmWave frequency of $f=38~$GHz, and bandwidth\footnote{{At these frequency and bandwidth values, beam squint is negligible. From \cite{beqmsquint}, pointing error due to beam squint is proportional to $\left(1+f/W\right)^{-1}\approx1$.}} $W=125~$MHz. We assume an ideal $\mathrm{sinc}$ pulse-shaping filter such that $W^2_\mathrm{eff}=W^2/3$. The transmitted power $E_\mathrm{t}/T_\mathrm{s}=0~$dBm, and $N_0=-170$ dBm/Hz. Furthermore, we specify the number of pilots to be $N_\mathrm{s}=~64$ pilot symbols. This yields a location-dependent SNR of
\begin{align}
\mathrm{SNR}~[\mathrm{dB}] =150.26+20\log_{10}\left(\beta^\mathrm{b}\|\mathbf{a}^\mathrm{T}_{i}\mathbf{F}^\mathrm{H}_{i}\|\|\mathbf{W}^\mathrm{H}_j\mathbf{a}_j\|\right),
\end{align}
where $i,j\in\{1,2\}, i\neq{j}$, specified depending on the communication direction being forward or backward. {Note that this SNR results from beamforming gain of all the $N_\mathrm{B_1}$ and $N_\mathrm{B_2}$ beams combined.} {Although our formulation holds for any type of analog beamforming, in our numerical simulations,} we adopt \textit{fixed} directional beamforming  {as an example scheme} similar to \cite{Zohair2017}. We consider $ 1\leq b\leq{N_{\mathrm{B}_1}}=N_{\mathrm{B}_2}=25$ beams at both the UE and BS, such that 
\begin{figure}[!t]
	\centering
	\includegraphics[scale=0.60]{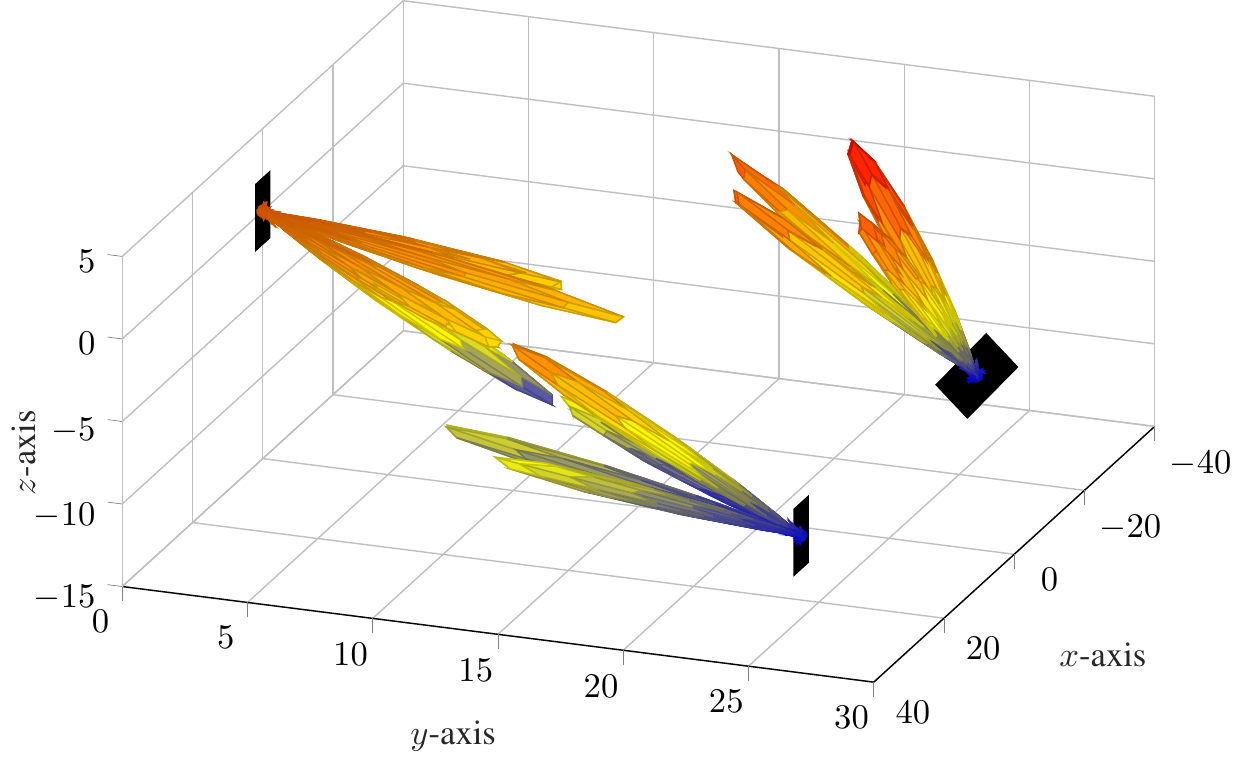}

	\caption{Beamforming example with 4 beams. The rightmost device has orientation angles of $30^\circ$, while the other two have $0^\circ$. }
	\label{fig:rotated_beams}
\end{figure}
\begin{align*}
\mathbf{f}_{1,b}&=\frac{1}{\sqrt{N_{\mathrm{B}_1}}}\mathbf{a}_{\mathrm{1}}(\theta^\mathrm{f}_{1,b},\phi^\mathrm{f}_{1,b}),\\
\mathbf{w}_{1,b}&=\frac{1}{\sqrt{N_{\mathrm{B}_1}}}\mathbf{a}_{\mathrm{1}}(\theta^\mathrm{w}_{1,b},\phi^\mathrm{w}_{1,b}),
\end{align*}
are D$_1$ transmit and receive beams pointing towards $(\theta^\mathrm{f}_{1,b},\phi^\mathrm{f}_{1,b})$ and $(\theta^\mathrm{w}_{1,b},\phi^\mathrm{w}_{1,b})$, respectively. The directions of the beams at the BS are chosen to be equispaced on the sector. On the UE, these directions are reversed to point upwards and rotated with respect to the UE frame of reference by the same orientation angles specified in the studied experiment.  This setting provides 90\% of the locations with an SNR of at least 17 dB. Fig.~\ref{fig:rotated_beams} provides three examples on beamforming configuration: a BS at $(0,0,0)$, with beams pointing downwards, a UE at $(25,25,-10)$ with zero orientation angles, and another UE at $(-25,25,-10)$ with $\mathbf{o}=[30^\circ,30^\circ]^\mathrm{T}$. The black rectangles denote the array frame of reference of the device. Note that the first UE has reversed beam direction compared to BS, while the second UE has beam directions reversed and rotated by $[30^\circ,30^\circ]^\mathrm{T}$ , so that the beam directions remain constant with respect to the UE local frame of reference.
\subsubsection{Scenarios Studied} We study the PEB and OEB under RLP and CLP and compare these bounds to those obtained for OWL in \cite{Zohair2017}. Each of these three protocols is studied when localization is performed in the uplink (at BS) and in the downlink (at UE). Recall that CLP is symmetric in both cases, hence only one curve is given.
\subsection{PEB and OEB with $0^\circ$ UE Orientation}
\begin{figure}[!t]
	\centering
	\includegraphics[scale=0.60]{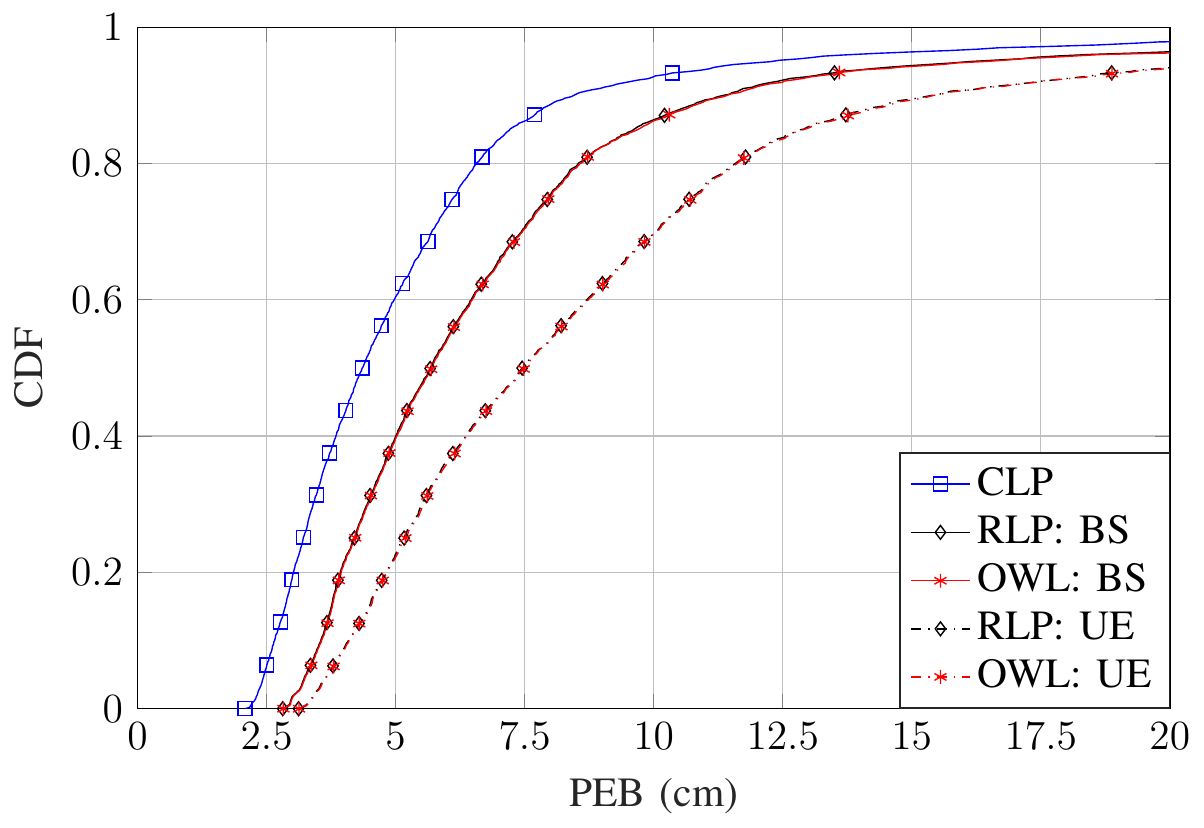}
	\caption{CDF of PEB with UE orientation angles of $0^\circ$, and $N_\mathrm{UE}=N_\mathrm{BS}=144$, $N_\mathrm{B}=25$.}
	\label{fig:peb0}
\end{figure}
\begin{figure}[!t]
	\centering
	\includegraphics[scale=0.60]{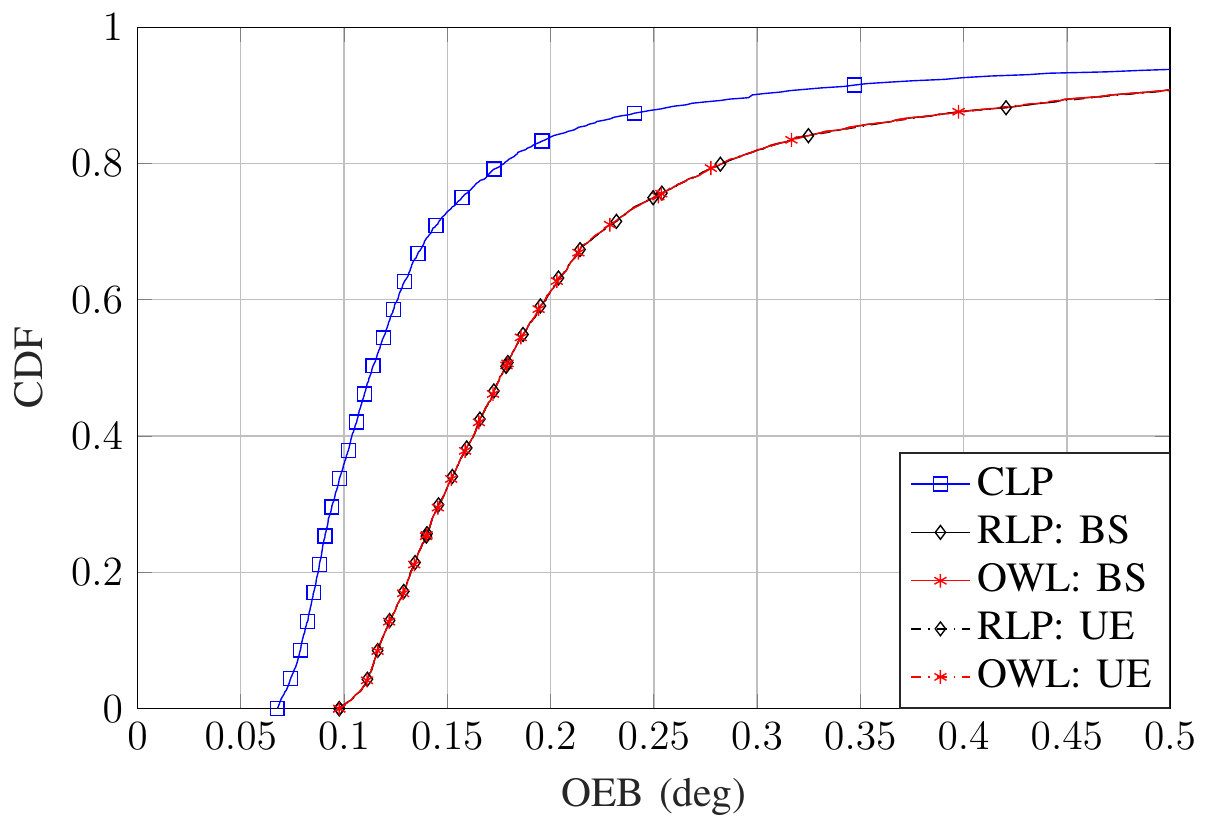}
	\caption{CDF of OEB with UE orientation angles of $0^\circ$, and $N_\mathrm{UE}=N_\mathrm{BS}=144$, $N_\mathrm{B}=25$.}
	\label{fig:oeb0}
\end{figure}

The {cumulative distribution function} (CDF) of the PEB with zero orientation angles is provided in Fig.~\ref{fig:peb0} for all the considered protocols. First of all, to have a fair comparison, we compare the three solid curves corresponding to uplink localization, and then compare those related to downlink localization (dash-dot lines). It can be seen that RLP provides a negligible improvement over OWL. Despite that, RLP is still a better approach since it alleviates the need for high-accuracy synchronization, with the cost of UE-BS coordination. As discussed in Section \ref{sec:comparison}, RLP and OWL have the same spatial component, but RLP has higher temporal information content. However, Fig.~\ref{fig:peb0} shows almost identical results for both protocols, which means that the additional temporal information in RLP is of little importance, and thus the localization performance is limited by the angles estimation rather than the time delay. To understand this phenomenon more, we study the impact of the bandwidth on the performance later in Section \ref{sec:W}. On the other hand, as expected, CLP represents the best approach among the three studied, since it attains more useful information. However, this requires a more complex implementation due to the need for a feedback channel.

{Comparing the dash-dotted curves with the solid curves in Fig.~\ref{fig:peb0}, it can be seen that the three protocols behave in the downlink in a manner similar to the uplink. And, it can also be seen that while OWL and RLP are almost identical, CLP is superior to both. However, the reasons why the performance of RLP and OWL is worse in the downlink are beyond the scope of this paper and were extensively studied in \cite{Zohair2017}. Briefly,} it was concluded that, under matched orientation between the BS and UE, the uplink PEB is better than the downlink PEB. This is because 1) PEB is a function of the CRLB of the BS angles, and 2) CRLB of DOA is lower than CRLB of DOD. Therefore, when the BS angles are DOAs (uplink), the PEB will be lower.

Considering the CDF of the  OEB with zero orientation angles in Fig.~\ref{fig:oeb0}, it can be seen that RLP and OWL exhibit identical performance. Note that OEB depends on DOA and DOD, while the enhancement of RLP over OWL is in the temporal domain. Furthermore, in line with the results in \cite{Zohair2017} with zero orientation angles, the uplink and downlink OEB are the same. Therefore, the four curves of RLP and OWL with uplink and downlink localization coincide. Moreover, in terms of OEB, CLP is also better than RLP and OWL due to the fourth term in \eqref{eq:FIM_OP_CLP}, which accounts for the coupling between the path gain and the transmission angles, providing more spatial information on the orientation angles. {Intuitively, this higher information is a result of estimating the path gain in both transmissions.}
\subsection{PEB and OEB with $30^\circ$ UE Orientation}\label{sec:sim_peb_oeb}
\begin{figure}[!t]
	\centering
	\includegraphics[scale=0.60]{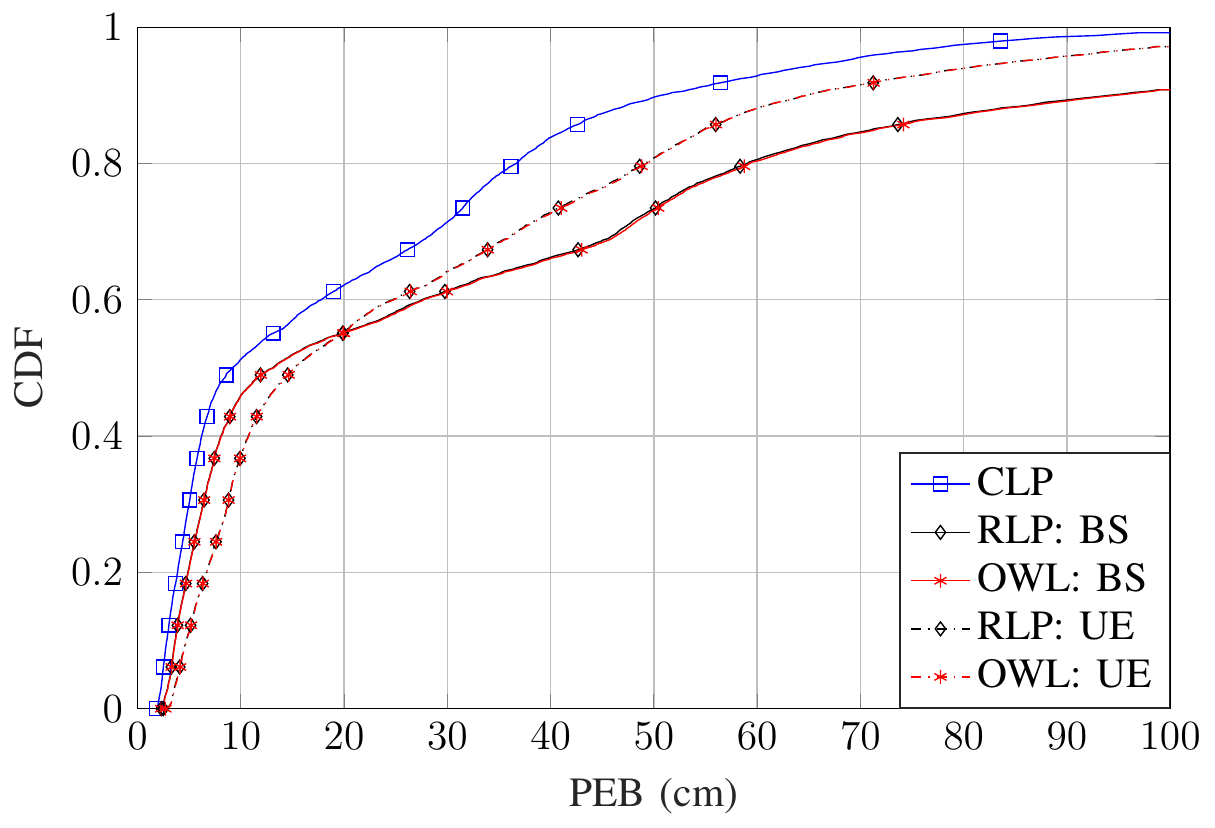}
	\caption{CDF of PEB with UE orientation angles of $30^\circ$, and $N_\mathrm{UE}=N_\mathrm{BS}=144$, $N_\mathrm{B}=25$.}
	\label{fig:peb30}
\end{figure}
	\begin{figure}[!t]
\centering
\includegraphics[scale=0.60]{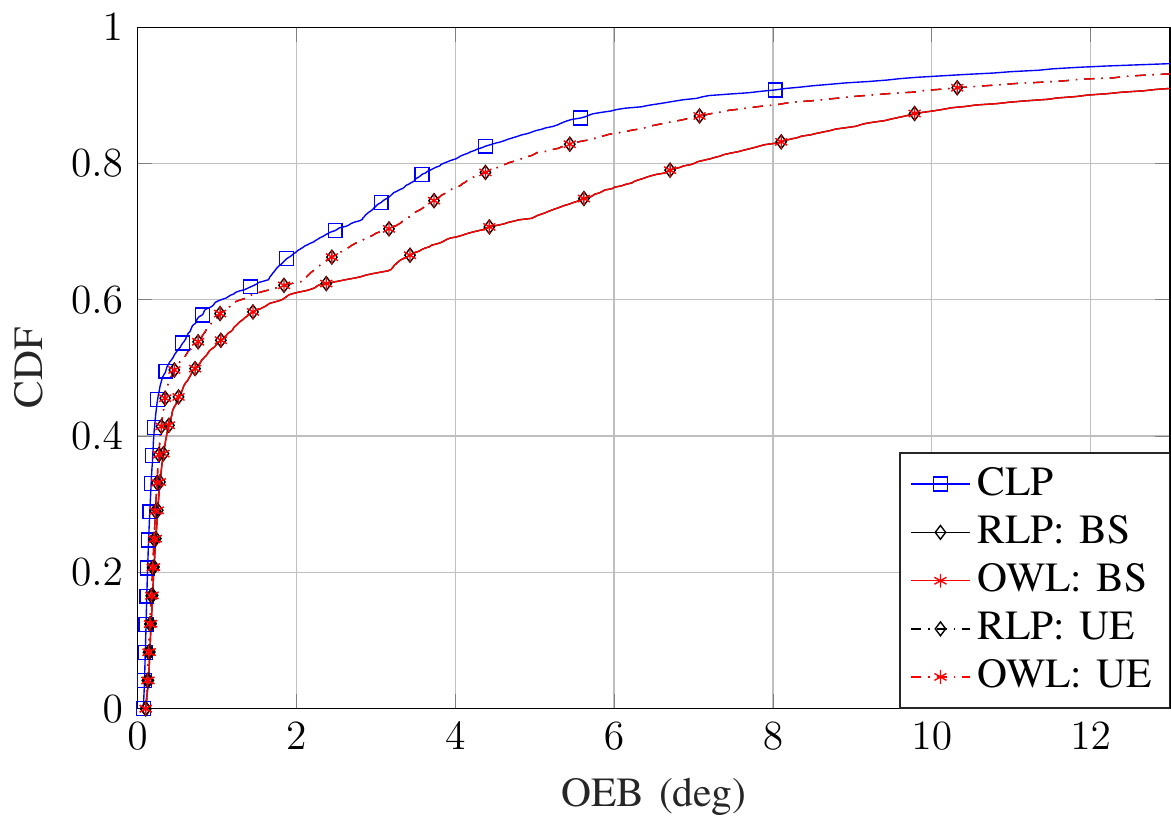}
\caption{CDF of OEB with UE orientation angles of $30^\circ$, and $N_\mathrm{UE}=N_\mathrm{BS}=144$, $N_\mathrm{B}=25$.}
\label{fig:oeb30}
	\end{figure}
The CDF of the PEB with orientation angles $\mathbf{o}=[30^\circ,30^\circ]^\mathrm{T}$ is shown in Fig.~\ref{fig:peb30}, for all the considered protocols. The overall observation from this figure, in comparison to Fig.~\ref{fig:peb0}, is that the performance worsens when the beams are steered away, i.e., when the orientation angles are non-zero. This can result in a loss of beamforming gain that depends non-linearly on the UE location and orientation angles. However, CLP performance is still superior to RLP and OWL. In this example, performance loss of $42$ cm, $54$ cm, and $80$ cm were observed at a PEB CDF of 90\%, under CLP, uplink RLP, and downlink RLP, respectively. On the other hand, comparing Fig.~\ref{fig:oeb30} with Fig.~\ref{fig:oeb0}, it can be seen that, at a CDF of 90\%, there is an OEB performance loss of $6.8^\circ$, $8.8^\circ$, and $11.5^\circ$ under CLP, uplink RLP, and downlink RLP, respectively. Considering the PEB and OEB loss, it can be concluded that, among the studied approaches, CLP is the approach that is most robust to UE mis-orientation. Finally, we note that in comparison to the case of matched orientation, under 30$^\circ$ mis-orientation, the system can still provide sub-meter PEB, while providing significantly higher OEB. This means that orientation estimation is more challenging than position estimation. {Recall that orientation changes the beamforming angles, which impacts localization performance. Hence, the study of orientation in this context is meaningful, despite this degraded performance.}

\subsection{Impact of the System Bandwidth on PEB}\label{sec:W}
\begin{figure}[!t]
	\centering
	\includegraphics[scale=0.60]{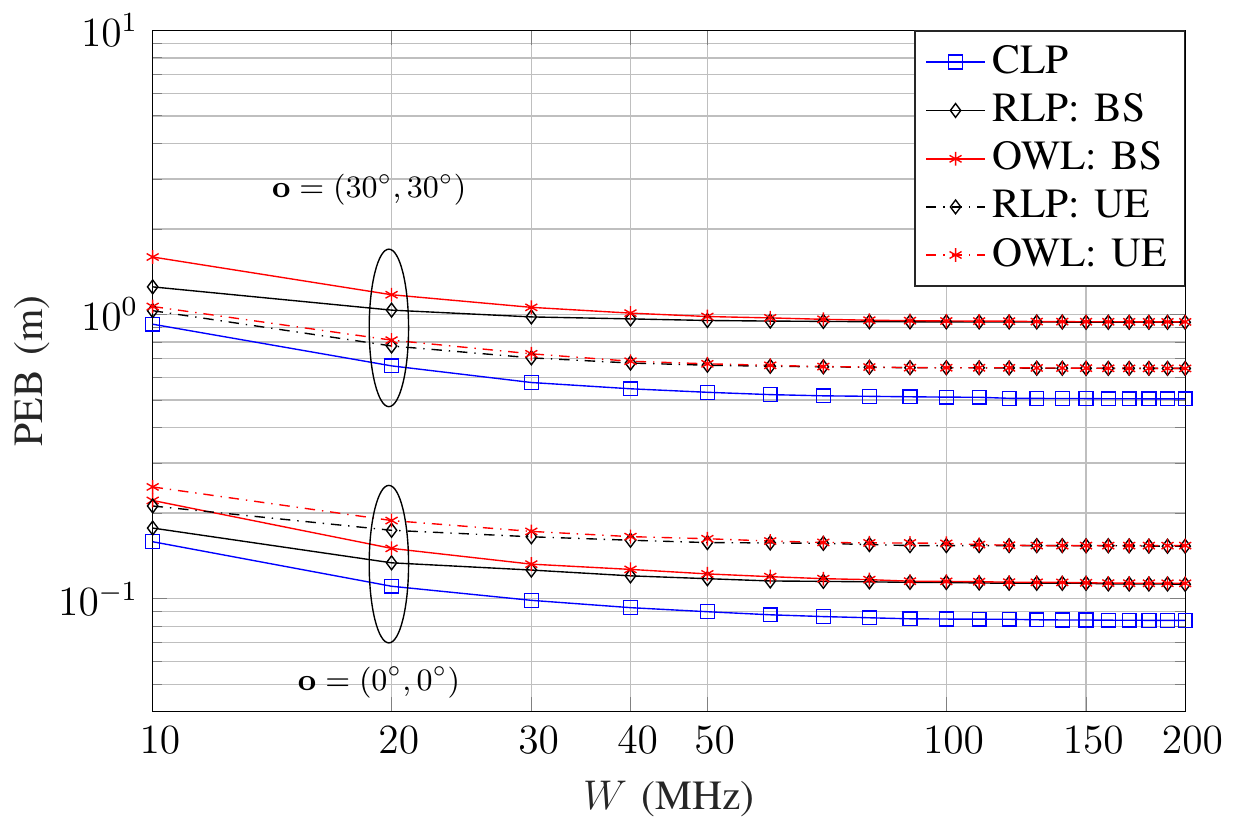}
	\caption{PEB at 0.9 CDF with respect to the bandwidth $W$.}
	\label{fig:W_sat}
\end{figure}
In Section \ref{sec:sim_peb_oeb}, we concluded that the system is limited by the estimation of the angles rather than the time delay. To investigate this phenomenon further, we now look closer into the impact of the bandwidth. {In the context of localization and ranging, higher bandwidths provide a more accurate estimation of the TOA, which leads to better localization bounds in general. Towards that,} the results in Fig.~\ref{fig:W_sat} indicate that as the bandwidth increases, the PEB decreases, until it reaches a floor at around $100$ MHz when $\mathbf{o}=[0^\circ,0^\circ]^\mathrm{T}$, and $60$ MHz when $\mathbf{o}=[30^\circ,30^\circ]^\mathrm{T}$. Based on these results, we make the following observations:
\begin{enumerate}
	\item At higher bandwidths that are more relevant in mmWave, the temporal information is very high compared to the spatial information, and the performance becomes fixed with $W$, i.e., the systems are spatially-limited.
	\item Under mis-orientation, the accuracy of spatial information degrades, and the system becomes spatially-limited. Hence, the improved temporal information does not provide any benefit to the performance achieved at lower bandwidths. 
	\item On the contrary, for lower bandwidths, the amount of temporal information decreases and becomes comparable to the spatial information. Therefore, the weight of the temporal information in the forward transmission becomes more significant, and the difference between OWL and RLP becomes more pronounced.
\end{enumerate}

\subsection{Impact of $N_\mathrm{BS}$ and $N_\mathrm{UE}$ on PEB}\label{sec:Nbs_Nue}
We now study the effect of the number of antennas at BS and UE on the PEB under CLP and RLP. Since this number can be $N_1$ or $N_2$ depending on the device role, we use $N_\mathrm{BS}$ and $N_\mathrm{UE}$ to unify the notation of the number of antennas at BS and UE, respectively.

Fig.~\ref{fig:N_ue} illustrates the effect of $N_\mathrm{UE}$ on PEB with $N_\mathrm{B}=25$ and $N_\mathrm{BS}=144$. It can be seen that at matched orientation (0$^\circ$, 0$^\circ$), performance tends to slightly improve with low to moderate $N_\mathrm{UE}$ values. However, higher $N_\mathrm{UE}$ generally results in a worse performance. This is because with higher $N_\mathrm{UE}$, the UE beams become narrower, and more beams are required to provide a full area coverage. It can also be noticed that, with an orientation of (30$^\circ$, 30$^\circ$), the rate of performance deterioration is higher. It is interesting to see that this rate is almost the same for the three protocols, which means that the performance loss is mainly due to SNR loss.
\begin{figure}[!t]
	\centering
	\includegraphics[scale=0.60]{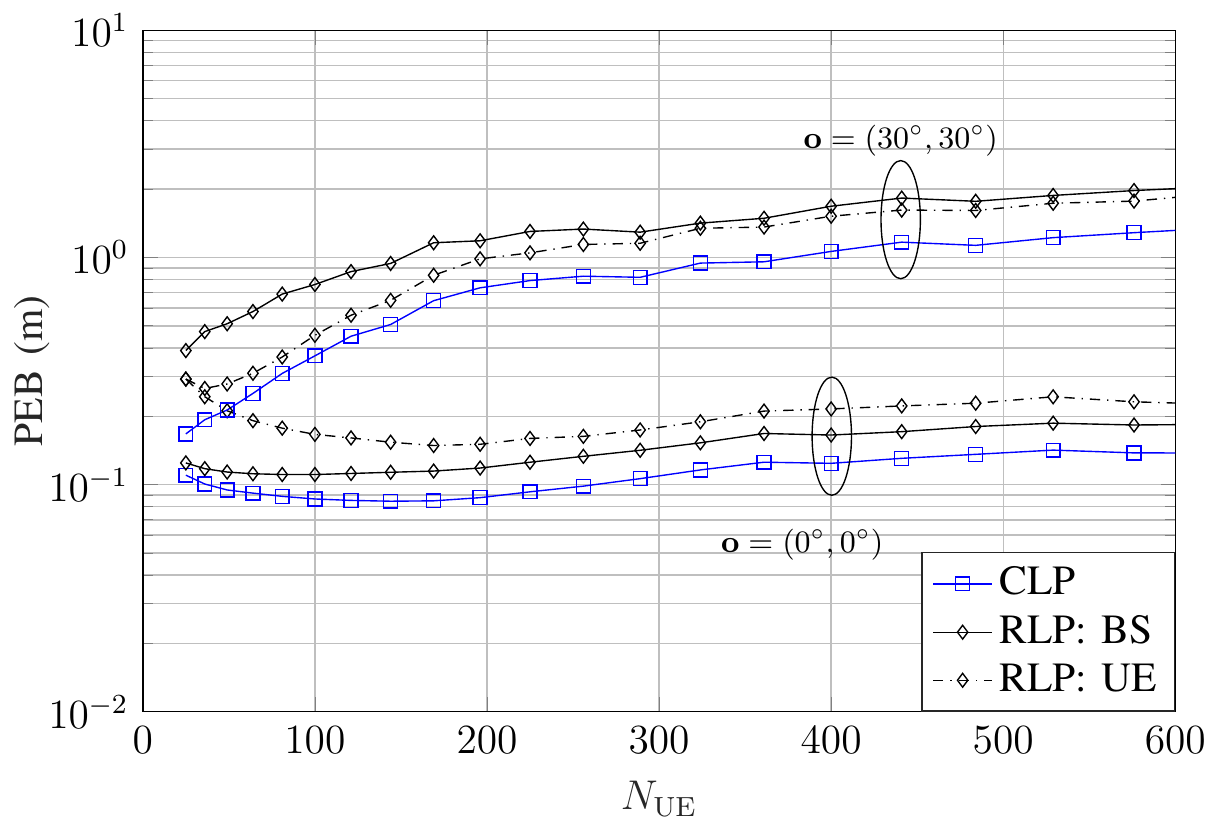}
	\caption{PEB at 0.9 CDF as a function of the UE number of antennas, with $N_\mathrm{B}=25$, with orientation angles $0^\circ$ and $30^\circ$, and $N_\mathrm{BS}=144$.}
	\label{fig:N_ue}
\end{figure}
\begin{figure}[!t]
	\centering
	\includegraphics[scale=0.60]{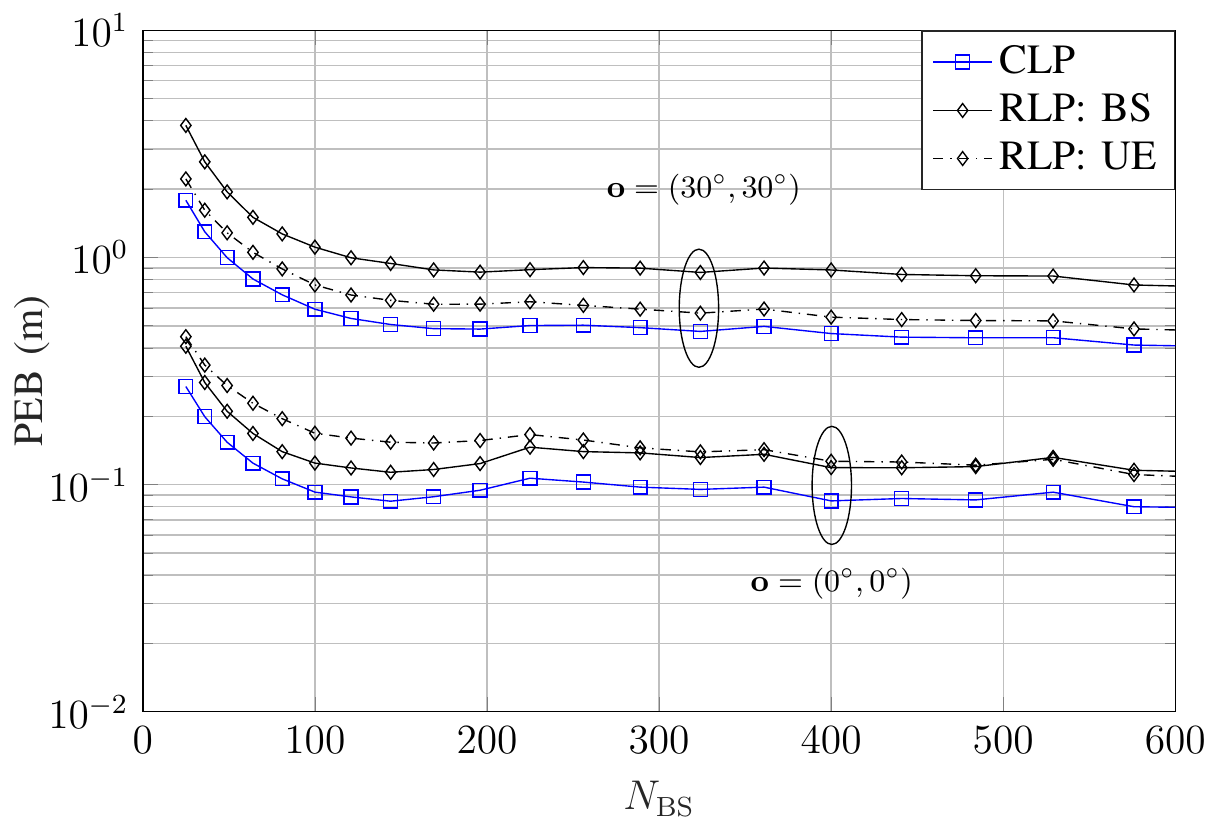}
	\caption{PEB at 0.9 CDF as a function of the BS number of antennas, with $N_\mathrm{B}=25$, with orientation angles $0^\circ$ and $30^\circ$, and $N_\mathrm{UE}=144$.}
	\label{fig:N_bs}
\end{figure}

On the other hand, the impact of $N_\mathrm{BS}$ is shown in Fig.~\ref{fig:N_bs} with $N_\mathrm{B}=25$ and $N_\mathrm{UE}=144$. It can be seen that a higher $N_\mathrm{BS}$ slightly improves the PEB in general. Similar to the case in Fig.~\ref{fig:N_ue}, it is understood that the PEB will generally increase when $N_\mathrm{BS}$ is arbitrarily large, albeit, at $N_\mathrm{BS}$ values well beyond those displayed in Fig.~\ref{fig:N_bs}, and with a lesser magnitude than higher $N_\mathrm{UE}$. Therefore, adding more antennas at the BS will not reduce the localization performance, as the UE antennas potentially would, at least within the studied range of array size. Finally, notice that both Figs.~\ref{fig:N_ue} and \ref{fig:N_bs} exhibit some non-monotonic trend. This is due to the nature of directional beamforming, whereby the beamforming gain depends on the user location, number of antennas, and beams directions as detailed in \cite{Zohair2016}. {In other words, varying the number of antennas results in a different sidelobe pattern that non-linearly varies the PEB and OEB.}

\section{Conclusions}\label{sec:conclusions}
Many publications on localization assume that the BS and UE are tightly synchronized. However, usually, communication systems are not synchronized to a high-level useful for localization. Focusing on this issue, in this paper, we considered two protocols of two-way localization referred to {round-trip localization protocol (RLP) and collaborative localization protocol (CLP)}. We investigated the PEB and OEB under these two protocols, where we showed mathematically that CLP outperforms RLP with a significant margin. However, this comes with the cost of requiring a feedback channel, unlike RLP where no synchronization or feedback are required, although it may need dedicated hardware to trigger the response. In our derivations, we considered beamforming at the transmitter and the receiver and accounted for the spatially-correlated receive noise. Considering the results of the numerical simulation, the enhancement observed for RLP over the traditional OWL was limited. That is, the localization was angle-limited rather than delay-limited. Moreover, our numerical results also showed that it is more beneficial to have more antennas at the BS than at the UE.

Future work based on this paper includes considering adaptive beamforming, whereby the beam directions are modified in the second round of transmission. Moreover, multipath propagation would be a relevant extension, since scatterers may differ in the uplink and downlink, depending on the beam directions. 

\section*{Acknowledgment}
The authors would like to thank Dr. Xiangyun Zhou of the Research School of Engineering at the Australian National University for his valuable feedback on this work.
\appendices
\section{Derivation of the Elements of the FIM of the Channel Parameters}\label{sec:app_fim_dl}
Consider backward transmission round. In this case, D$_1$ has the following observation
\begin{align}
\mathbf{y}_1(t)=\sqrt{N_\mathrm{1}N_\mathrm{2}E_\mathrm{t}}h^\mathrm{b}\mathbf{W}_\mathrm{1}^\mathrm{H}\mathbf{a}_\mathrm{1}\mathbf{a}_\mathrm{2}^\mathrm{T}\mathbf{F}_\mathrm{2}\mathbf{s}_2(t-\tau^\mathrm{b})+\mathbf{n}_1(t).
\end{align}
For the case of zero-mean additive correlated Gaussian noise, the FIM of $\boldsymbol{\varphi}^\mathrm{b}$ is given by \cite{kay1993}

	\begin{align}
	J^\mathrm{b}_{xy}\triangleq&
	=\frac{1}{N_0}\int_0^{T_\mathrm{o}}\Re\left\{\frac{\partial\boldsymbol{\mu}^\mathrm{H}(t)}{\partial x}\left(\mathbf{W}_1^\mathrm{H}\mathbf{W}_1\right)^{-1}\frac{\partial\boldsymbol{\mu}(t)}{\partial y}\right\}\mathrm{d}t,\\ &x,y\in\{\theta_{{1}},\phi_{{1}},\theta_{{2}},\phi_{{2}},\psi^\mathrm{b},\beta^\mathrm{b},\tau^\mathrm{b}\}\notag
	\end{align}

where $\boldsymbol{\mu}(t)$ is the mean of the observation vector, and $T_\mathrm{o}$ is assumed to be long enough to receive the entire pilot signal. Consequently, we write
\begin{align}
\boldsymbol{\mu}(t)=\sqrt{N_\mathrm{1}N_\mathrm{2}E_\mathrm{t}}\beta^\mathrm{b}\mathrm{e}^{j\psi^\mathrm{b}}\mathbf{W}_\mathrm{1}^\mathrm{H}\mathbf{a}_\mathrm{1}\mathbf{a}_\mathrm{2}^\mathrm{T}\mathbf{F}_\mathrm{2}\mathbf{s}_2(t-\tau^\mathrm{b}).
\end{align}
Defining  $\mathbf{\dot{s}}({t})\triangleq\frac{\partial\mathbf{{s}}({t})}{\partial t}, \mathbf{k}_\mathrm{i}=\frac{\partial}{\partial \theta_{i}}\mathbf{a}_\mathrm{i}, \mathbf{p}_\mathrm{i}=\frac{\partial}{\partial \phi_{i}}\mathbf{a}_\mathrm{i}$, $i\in\{1,2\}$, and the operator  $\mathbf{P}_\mathbf{A}\triangleq\mathbf{A}\left(\mathbf{A}^\mathrm{H}\mathbf{A}\right)^{-1}\mathbf{A}^\mathrm{H}$, and $\gamma\triangleq{N_1N_2N_\mathrm{s}E_\mathrm{t}}/{N_0}$, we can write the following
\begin{subequations}\label{eq:SpatialFIM1}
	\begin{align}
	J^\mathrm{b}_{\theta_1}&=\gamma{\beta^\mathrm{b}}^2\left(\mathbf{a}_\mathrm{2}^\mathrm{T}\mathbf{F}_\mathrm{2}\mathbf{F}_\mathrm{2}^\mathrm{H}\mathbf{a}_\mathrm{2}^{*}\right)\left(\mathbf{k}_\mathrm{1}^\mathrm{H}\mathbf{P}_{\mathbf{W}_1}\mathbf{k}_\mathrm{1}\right)\\
	J^\mathrm{b}_{\phi_1}&=\gamma{\beta^\mathrm{b}}^2\left(\mathbf{a}_\mathrm{2}^\mathrm{T}\mathbf{F}_\mathrm{2}\mathbf{F}_\mathrm{2}^\mathrm{H}\mathbf{a}_\mathrm{2}^{*}\right)\left(\mathbf{p}_\mathrm{1}^\mathrm{H}\mathbf{P}_{\mathbf{W}_1}\mathbf{p}_\mathrm{1}\right)\\
	J^\mathrm{b}_{\theta_2}&=\gamma{\beta^\mathrm{b}}^2\left(\mathbf{k}_\mathrm{2}^\mathrm{T}\mathbf{F}_\mathrm{2}\mathbf{F}_\mathrm{2}^\mathrm{H}\mathbf{k}_\mathrm{2}^{*}\right)\left(\mathbf{a}_\mathrm{1}^\mathrm{H}\mathbf{P}_{\mathbf{W}_1}\mathbf{a}_\mathrm{1}\right)\\
	J^\mathrm{b}_{\phi_2}&=\gamma{\beta^\mathrm{b}}^2\left(\mathbf{p}_\mathrm{2}^\mathrm{T}\mathbf{F}_\mathrm{2}\mathbf{F}_\mathrm{2}^\mathrm{H}\mathbf{p}_\mathrm{2}^{*}\right)\left(\mathbf{a}_\mathrm{1}^\mathrm{H}\mathbf{P}_{\mathbf{W}_1}\mathbf{a}_\mathrm{1}\right)\\
	J^\mathrm{b}_{\beta^\mathrm{b}}&=\gamma\left(\mathbf{a}_\mathrm{2}^\mathrm{T}\mathbf{F}_\mathrm{2}\mathbf{F}_\mathrm{2}^\mathrm{H}\mathbf{a}_\mathrm{2}^{*}\right)\left(\mathbf{a}_\mathrm{1}^\mathrm{H}\mathbf{P}_{\mathbf{W}_1}\mathbf{a}_\mathrm{1}\right),\\
	J^\mathrm{b}_{\psi^\mathrm{b}}&=\gamma{\beta^\mathrm{b}}^2\left(\mathbf{a}_\mathrm{2}^\mathrm{T}\mathbf{F}_\mathrm{2}\mathbf{F}_\mathrm{2}^\mathrm{H}\mathbf{a}_\mathrm{2}^{*}\right)\left(\mathbf{a}_\mathrm{1}^\mathrm{H}\mathbf{P}_{\mathbf{W}_1}\mathbf{a}_\mathrm{1}\right),\\
	J^\mathrm{b}_{\theta_1\phi_1}&=\gamma{\beta^\mathrm{b}}^2\left(\mathbf{a}_\mathrm{2}^\mathrm{T}\mathbf{F}_\mathrm{2}\mathbf{F}_\mathrm{2}^\mathrm{H}\mathbf{a}_\mathrm{2}^{*}\right)\left(\mathbf{p}_\mathrm{1}^\mathrm{H}\mathbf{P}_{\mathbf{W}_1}\mathbf{k}_\mathrm{1}\right),\label{eq:example}\\
	J^\mathrm{b}_{\theta_1\theta_2}&=\gamma{\beta^\mathrm{b}}^2\left(\mathbf{k}_\mathrm{2}^\mathrm{T}\mathbf{F}_\mathrm{2}\mathbf{F}_\mathrm{2}^\mathrm{H}\mathbf{a}_\mathrm{2}^{*}\right)\left(\mathbf{k}_\mathrm{1}^\mathrm{H}\mathbf{P}_{\mathbf{W}_1}\mathbf{a}_\mathrm{1}\right),\\
	J^\mathrm{b}_{\theta_1\phi_2}&=\gamma{\beta^\mathrm{b}}^2\left(\mathbf{p}_\mathrm{2}^\mathrm{T}\mathbf{F}_\mathrm{2}\mathbf{F}_\mathrm{2}^\mathrm{H}\mathbf{a}_\mathrm{2}^{*}\right)\left(\mathbf{k}_\mathrm{1}^\mathrm{H}\mathbf{P}_{\mathbf{W}_1}\mathbf{a}_\mathrm{1}\right),\\
	J^\mathrm{b}_{\theta_1\beta^\mathrm{b}}&=\gamma{\beta^\mathrm{b}}\left(\mathbf{a}_\mathrm{2}^\mathrm{T}\mathbf{F}_\mathrm{2}\mathbf{F}_\mathrm{2}^\mathrm{H}\mathbf{a}_\mathrm{2}^{*}\right)\left(\mathbf{k}_\mathrm{1}^\mathrm{H}\mathbf{P}_{\mathbf{W}_1}\mathbf{a}_\mathrm{1}\right)\\
	J^\mathrm{b}_{\phi_1\theta_2}&=\gamma{\beta^\mathrm{b}}^2\left(\mathbf{k}_\mathrm{2}^\mathrm{T}\mathbf{F}_\mathrm{2}\mathbf{F}_\mathrm{2}^\mathrm{H}\mathbf{a}_\mathrm{2}^{*}\right)\left(\mathbf{p}_\mathrm{1}^\mathrm{H}\mathbf{P}_{\mathbf{W}_1}\mathbf{a}_\mathrm{1}\right),\\
	J^\mathrm{b}_{\phi_1\phi_2}&=\gamma{\beta^\mathrm{b}}^2\left(\mathbf{p}_\mathrm{2}^\mathrm{T}\mathbf{F}_\mathrm{2}\mathbf{F}_\mathrm{2}^\mathrm{H}\mathbf{a}_\mathrm{2}^{*}\right)\left(\mathbf{p}_\mathrm{1}^\mathrm{H}\mathbf{P}_{\mathbf{W}_1}\mathbf{a}_\mathrm{1}\right),\\
	J^\mathrm{b}_{\phi_1\beta^\mathrm{b}}&={\gamma}{\beta^\mathrm{b}}\left(\mathbf{a}_\mathrm{2}^\mathrm{T}\mathbf{F}_\mathrm{2}\mathbf{F}_\mathrm{2}^\mathrm{H}\mathbf{a}_\mathrm{2}^{*}\right)\left(\mathbf{p}_\mathrm{1}^\mathrm{H}\mathbf{P}_{\mathbf{W}_1}\mathbf{a}_\mathrm{1}\right),\\
	J^\mathrm{b}_{\theta_2\phi_2}&=\gamma{\beta^\mathrm{b}}^2\left(\mathbf{p}_\mathrm{2}^\mathrm{T}\mathbf{F}_\mathrm{2}\mathbf{F}_\mathrm{2}^\mathrm{H}\mathbf{k}_\mathrm{2}^{*}\right)\left(\mathbf{a}_\mathrm{1}^\mathrm{H}\mathbf{P}_{\mathbf{W}_1}\mathbf{a}_\mathrm{1}\right),\\
	J^\mathrm{b}_{\theta_2\beta^\mathrm{b}}&={\gamma}{\beta^\mathrm{b}}\left(\mathbf{a}_\mathrm{2}^\mathrm{T}\mathbf{F}_\mathrm{2}\mathbf{F}_\mathrm{2}^\mathrm{H}\mathbf{k}_\mathrm{2}^{*}\right)\left(\mathbf{a}_\mathrm{1}^\mathrm{H}\mathbf{P}_{\mathbf{W}_1}\mathbf{a}_\mathrm{1}\right),\\
	J^\mathrm{b}_{\phi_2\beta^\mathrm{b}}&={\gamma}{\beta^\mathrm{b}}\left(\mathbf{a}_\mathrm{2}^\mathrm{T}\mathbf{F}_\mathrm{2}\mathbf{F}_\mathrm{2}^\mathrm{H}\mathbf{p}_2^{*}\right)\left(\mathbf{a}_\mathrm{1}^\mathrm{H}\mathbf{P}_{\mathbf{W}_1}\mathbf{a}_\mathrm{1}\right),\\
	J_{\tau^\mathrm{b}}&=4\gamma{\beta^\mathrm{b}}^2\pi^2W_\mathrm{eff}^2\left(\mathbf{a}_\mathrm{2}^\mathrm{T}\mathbf{F}_\mathrm{2}\mathbf{F}_\mathrm{2}^\mathrm{H}\mathbf{a}_\mathrm{2}^{*}\right)\left(\mathbf{a}_\mathrm{1}^\mathrm{H}\mathbf{P}_{\mathbf{W}_1}\mathbf{a}_\mathrm{1}\right).  \label{eq:tau_b_tau_f}
	\end{align}
\end{subequations}
\normalsize
where
\begin{align*}
W_\mathrm{eff}^2&=\int_{-W/2}^{W/2}f^2 |G(f)|^2\mathrm{d}f.\notag
\end{align*}
Other entries in $\mathbf{J}_{\boldsymbol{\varphi}}^{\mathrm{b}}$ are zero because
\begin{align}
\int_0^{T_\mathrm{o}}\mathbf{{s}}_2^\mathrm{H}(t-\tau^\mathrm{b})\mathbf{\dot{s}}_2(t-\tau^\mathrm{b})\mathrm{d}t&=0,\\
\int_0^{T_\mathrm{o}}\mathbf{{s}}_2(t-\tau^\mathrm{b})\mathbf{{s}}_2^\mathrm{H}(t-\tau^\mathrm{b})\mathrm{d}t&=N_\mathrm{s}\mathbf{I}_{N_\mathrm{B}}.
\end{align}

In forward transmission, the subscripts  $``1"$ and $``2"$ should be interchanged in \eqref{eq:SpatialFIM1} and the superscript $``\mathrm{b}"$ replaced by $``\mathrm{f}"$. For example, from $J^\mathrm{b}_{\theta_1\phi_1}$ in \eqref{eq:example}, we can calculate $J^\mathrm{f}_{\theta_2\phi_2}=\gamma{\beta^\mathrm{f}}^2\left(\mathbf{a}_\mathrm{1}^\mathrm{T}\mathbf{F}_\mathrm{1}\mathbf{F}_\mathrm{1}^\mathrm{H}\mathbf{a}_\mathrm{1}^{*}\right)\left(\mathbf{p}_\mathrm{2}^\mathrm{H}\mathbf{P}_{\mathbf{W}_2}\mathbf{k}_\mathrm{2}\right)$, which goes in row 3, column 4 in the forward-transmission counterpart of \eqref{eq:jthetatheta}.

\section{Proof of Theorem 1}\label{app:proof_theo}
Define the vector of all unknown parameters as $\mathbf{v}=[\mathbf{x}^\mathrm{T},\mathbf{z}_1^\mathrm{T},\mathbf{z}_2^\mathrm{T}]^\mathrm{T}$, then the FIM of $\mathbf{v}$ based on the first and second observations are, respectively,
\begin{align*}
    \mathbf{J}_\mathbf{v}^{(1)}&=\begin{bmatrix}
    \mathbf{J}_\mathbf{x}^{(1)}&\mathbf{J}_{\mathbf{x},\mathbf{z}_1}^{(1)}&\mathbf{0}\\ 
    \mathbf{J}_{\mathbf{x},\mathbf{z}_1}^{\mathrm{T}(1)}&\mathbf{J}_{\mathbf{z}_1}^{(1)}&\mathbf{0}\\
    \mathbf{0}&\mathbf{0}&\mathbf{0}
    \end{bmatrix},\ 
    \mathbf{J}_\mathbf{v}^{(2)}=\begin{bmatrix}
    \mathbf{J}_\mathbf{x}^{(2)}&\mathbf{0}&\mathbf{J}_{\mathbf{x},\mathbf{z}_2}^{(2)}\\ 
    \mathbf{0}&\mathbf{0}&\mathbf{0}\\
    \mathbf{J}_{\mathbf{x},\mathbf{z}_2}^{\mathrm{T}(2)}&\mathbf{0}&\mathbf{J}_{\mathbf{z}_2}^{(2)}
    \end{bmatrix}
\end{align*}
Since the two observations are independent,
\begin{align}
    \mathbf{J}_\mathbf{v}&=\begin{bmatrix}
    \mathbf{J}_\mathbf{x}^{(1)}+\mathbf{J}_\mathbf{x}^{(2)}&\mathbf{J}_{\mathbf{x},\mathbf{z}_1}^{(1)}&\mathbf{J}_{\mathbf{x},\mathbf{z}_2}^{(2)}\\ 
    \mathbf{J}_{\mathbf{x},\mathbf{z}_1}^{\mathrm{T}(1)}&\mathbf{J}_{\mathbf{z}_1}^{(1)}&\mathbf{0}\\
       \mathbf{J}_{\mathbf{x},\mathbf{z}_2}^{\mathrm{T}(2)}&\mathbf{0}&\mathbf{J}_{\mathbf{z}_2}^{(2)}
    \end{bmatrix}
\end{align}
Consequently, EFIM of $\mathbf{x}$ is given by Schur complement as 
\begin{align}
    \mathbf{J}_\mathbf{x}^\mathrm{e}=& \mathbf{J}_\mathbf{x}^{(1)}+\mathbf{J}_\mathbf{x}^{(2)}\notag\\&-\mathbf{J}_{\mathbf{x},\mathbf{z}_1}^{(1)}\left(\mathbf{J}_{\mathbf{z}_1}^{(1)}\right)^{-1}\mathbf{J}_{\mathbf{x},\mathbf{z}_1}^{\mathrm{T}(1)}-\mathbf{J}_{\mathbf{x},\mathbf{z}_2}^{(2)}\left(\mathbf{J}_{\mathbf{z}_2}^{(2)}\right)^{-1}\mathbf{J}_{\mathbf{x},\mathbf{z}_2}^{\mathrm{T}(2)}\label{eq:proof}
\end{align}
Note that the first and third term in \eqref{eq:proof} represent the Schur complement of $\mathbf{x}$ with respect to $\mathbf{z}_1$ obtained from the first process, while the second and fourth term represent the Schur complement of $\mathbf{x}$ with respect to $\mathbf{z}_2$ obtained from the second process. In other words,
\begin{align}
        \mathbf{J}^\mathrm{e}_\mathbf{x}=\mathbf{J}^\mathrm{e,1}_\mathbf{x}+\mathbf{J}^\mathrm{e,2}_\mathbf{x}.
\end{align}
\bibliographystyle{IEEEtran}

	\end{document}